\documentclass{article}

\usepackage{amsmath}
\usepackage{amsfonts}
\usepackage{lscape}
\usepackage{color}
\usepackage{hyperref}
\usepackage[a4paper,inner=3.0cm,outer=3.0cm,top=3.7cm,bottom=3.5cm]{geometry}

\newcommand{\citep}{\cite}
\newcommand{\citet}{\cite}
\newcommand{\email}{\tt}
\newcommand{\qed}{}
\usepackage{graphicx}

\DeclareMathOperator{\Cov}{Cov}
\newtheorem{example}{Example}
\newtheorem{proposition}{Proposition}
\newtheorem{definition}{Definition}
\newtheorem{lemma}{Lemma}
\newtheorem{proof}{Proof}
\begin{document}
\title{Functional ANOVA with Multiple Distributions: Implications for the Sensitivity Analysis of Computer Experiments
}

\author{Emanuele Borgonovo%
\thanks{Department of Decision Sciences, Bocconi University, Via Roentgen 1, 20836, Milan, Italy (\email{emanuele.borgonovo@unibocconi.it})}
\ and
Max D. Morris%
\thanks{Department of Statistics, Iowa State University, 2438 Osborn Drive, Ames, IA 50011-1210
  USA (\email{mmorris@iastate.edu})}{}
\ and
Elmar Plischke%
\thanks{Institut f\"{u}r Endlagerforschung, Technische Universit\"{a}t Clausthal, Adolph-Roemer-Str. 2a,
38678 Clausthal-Zellerfeld, Germany (\email{elmar.plischke@tu-clausthal.de})}{}}
\date{December 19, 2017}
\maketitle
\abstract{
The functional ANOVA expansion of a multivariate mapping plays a fundamental role in statistics. The expansion is unique once a unique distribution is assigned to the covariates. Recent investigations in the environmental and climate sciences show that analysts may not be in a position to assign a unique distribution in realistic applications. We offer a systematic investigation of existence, uniqueness, orthogonality, monotonicity and ultramodularity of the functional ANOVA expansion of a multivariate mapping when a multiplicity of distributions is assigned to the covariates. In particular, we show that a multivariate mapping can be associated with a core of probability measures that guarantee uniqueness. We obtain new results for variance decomposition and dimension distribution under mixtures. Implications for the global sensitivity analysis of computer experiments are also discussed.

Keywords: Computer Experiments; Functional ANOVA; Mixtures; Global Sensitivity Analysis.
}
\section{Introduction}\label{S_intro}

The functional ANOVA expansion \cite{EfroStei81} plays a central role in the design and analysis of computer experiments \cite{Saltelli2000}. It provides the mathematical background of modern approaches to statistical
inference in computer experiments \cite{OaklOhag04}. One of the key premises
to the current use of functional ANOVA-based methods is the unique
distribution assumption: \textit{We assume to have information about the
factors' probability distribution, either joint or marginal, with or without
correlation, and that this knowledge comes from measurements, estimates,
expert opinion, physical bounds, output from simulations, analogy with
factors for similar species, and so forth} \cite[p. 704]{saltelli_jasa_2002}. With this assumption, we obtain a unique functional ANOVA expansion and, consequently, a unique set of the associated
sensitivity measures.

However, lack of data, measurement errors, or expert disagreement may prevent analysts from
assigning a unique distribution to the model inputs. Millner et al.\ report that researchers have assigned
nineteen different distributions to climate sensitivity in alternative scientific investigations of the past ten years \cite{MillDietHeal12}. Gao et al.\ show high variability in
computer experiments performed under alternative scenarios \cite{Gao2016}. Paleari and Confalonieri test robustness of sensitivity results for uncertainty in distribution using the WARM model as a case study \cite{PaleConf16}. They show that uncertainty in distribution causes an overturn of the most important variables in $22\%$ of the cases. These are not the first works dealing with the robustness of a sensitivity analysis results to the choice of the model input distributions. Early on, Chick discusses the use of two-stage distributions in simulation experiments \cite{Chic01}, Hu et al.\ studies uncertainty quantification on a well known climate model under uncertainty in distribution \cite{HuCaoHong12}.  The work of Beckman and McKay is possibly the first work discussing the stability of sensitivity analysis results for perturbations in the model input distributions \cite{Beckman1987}. As Saltelli et al.\ underline, the use of multiple distributions may controversial \cite{SaltIST15}. Nonetheless, it has become a de-facto part of several studies and is frequently adopted.

\textcolor{black}{Our purpose is to offer a systematic investigation of the impact of removing
the unique distribution assumption on the classical functional ANOVA expansion of a multivariate mapping. }\textcolor{black}{We consider two paths that emerge from current and past practices. The starting datum is that the analyst posits a set $\mathcal{M}=\left\{\mu^1_{\mathbf{X}}(\mathbf{x}),\mu^2_{\mathbf{X}}(\mathbf{x}),\dots,\mu^Q_{\mathbf{X}}(\mathbf{x})\right\}$ of plausible model
input distributions. In the first
path, the analyst evaluates the model for each distribution in $\mathcal{M}$ separately and obtains sensitivity measures for each distribution --- without-prior path, henceforth. In
the second path, the analyst assigns a prior over the the distributions in $\mathcal{M}$ --- with-prior path henceforth. For each path, we investigate six relevant notions: existence, uniqueness, orthogonality, monotonicity, ultramodularity, variance decomposition and dimension distribution. We study the implications in light of three sensitivity analysis settings: factor prioritization, trend identification and interaction quantification.} 
 
\textcolor{black}{Let us report some of the findings. In both paths, existence is ensured if all the posited measures are compatible with the functional ANOVA expansion of the input output mapping. Regarding uniqueness, in the without-prior path the analyst is dealing with as many functional ANOVA expansions as many are the cores in $\mathcal{M}$. A core is defined as a set of probability measures that lead to identical 
expansions. Thus, one has uniqueness if all the posited measures belong to the same core. In the with-prior path, one regains uniqueness: a multivariate mapping can be uniquely represented as the mixture of functional ANOVA expansions. Regarding orthogonality, in the without-prior path it is preserved. In the with-prior path, mixtures of classical functional ANOVA effects are not orthogonal with respect to the mixture of the distributions in $\mathcal{M}$. Regarding monotonicity and ultramodularity, in the without-prior path these properties are preserved by the first order effects of the classical functional ANOVA expansion under each measure in $\mathcal{M}$, as shown in previous literature \cite{Beccacece2011}. In the with-prior path, we show that they are still preserved by the mixture of first order functional ANOVA effects.}

\textcolor{black}{Regarding variance decomposition and dimension distribution, in the without-prior path, the multiplicity of variance decompositions and of dimension distributions equals the cardinality of $\mathcal{M}$. Conversely, variance decomposition and dimension distribution regain
uniqueness in the with-prior path. The variance can be decomposed as the sum of two terms,
a structural term equal to the mixture of variance decompositions and a second
generated by the variability of the model output across the
measures in $\mathcal{M}$.}
\textcolor{black}{We analyze the question of whether there are conditions under which an analyst can proceed ignoring the presence of multiple distributions. Our analysis shows that in a trend identification setting, monotonicity of the input-output mapping is a sufficient condition for the indications about trend obtained under one measure to remain the same under any other measure in $\mathcal{M}$. The same does not apply for factor prioritization and interaction quantification where the analyst needs to deal with a multiplicity of sensitivity measures, unless she(he) posits a prior. Then, the question is how to deal with such multiplicity. Formalizing the approach in \cite{Gao2016}, we propose a robust extension of the sensitivity settings of \cite{saltelli_jasa_2002} and illustrate their application through a case study.}

\textcolor{black}{The remainder of the paper is organized as follows. Section\ %
\ref{S:Literature} presents a literature review. Section\ \ref{S:Preliminary} discusses the functional ANOVA expansion in the without-prior path and introduces the notion of functional ANOVA core. Section\ \ref{S:FANOVA:Prior} addresses the with-prior path, discussing uniqueness, monotonicity, orthogonality and ultramodularity. 
Section\ \ref{S:RobustIndices} addresses variance decomposition and dimension distribution in the without-prior and with-prior paths. 
Section~\ref{S:Discussion} offers a twofold discussion, focusing first on decision-theoretical aspects and then on the link between the mixture functional ANOVA decomposition and the generalized functional ANOVA expansion \cite{Li2012,Rahm14}. 
Section\ \ref{S:Numerics} discusses numerical aspects and presents an application. 
Section \ref{S:COncl} concludes the work.}%

\section{Functional ANOVA and Related Concepts: A Review}\label{S:Literature}

The functional ANOVA expansion of a multivariate mapping originates with the
works of \cite{FishMack23} and \cite{Hoef48}, and is 
\textcolor{black}{definitively established
with the well known proofs of \cite{EfroStei81}, \cite{Sobo93} and \cite{Rabitz1999} (see \cite{Owen13SIAM} for a
detailed historical account)}. \newline
Applications of the functional ANOVA expansion are numerous. Without any
exhaustiveness claim, we recall its role in the \textit{study of quasi-Monte
Carlo integration methods, where it is applied to various notions of the
effective dimension of an integrand} \cite[p. 1]{Owen03} and in dimension
reduction in high-dimensional problems in finance \cite{Wang06}. It is
fundamental for smoothing spline ANOVA models \cite%
{Wahba78,Guo02,LinZhang06,Huang98JMA}, as well as for generalized regression
models \cite{HuangAS98,HuangetalAS99,KaufSain10Bayesian}. It is also used in conjunction with metamodelling methods, such as Gaussian metamodelling 
\cite{OaklOhag04}, polynomial chaos expansion \cite{Wiener38,CameMart47,XiuKarn02,Sudret2008}  and \textcolor{black}{polynomial dimensional
decomposition \cite{Rahman20082091}. The latter finds applications in multi-scale fracture mechanics \cite{Rahman201127}, random eigenvalue
problems \cite{RahmYada11IJUQ}, and stochastic design optimization \cite{Ren2016425}.}

Regarding the mathematical framework, one writes the input-output mapping as 
\begin{equation}
g:\mathcal{X}\rightarrow \mathbb{R},  \label{e:g:x}
\end{equation}%
where $\mathcal{X}\subseteq \mathbb{R}^{n}$ and $n$ is the number of inputs.
Under uncertainty, we denote the input probability space by $(\mathcal{X%
},\mathcal{B}(\mathcal{X}),\mu )$, where $\mu :\mathcal{B}(\mathcal{X}%
)\rightarrow \lbrack 0,1]$ represents the input probability distribution.
Uncertainty in the input causes the model output to become a function of
random variables, $G=g(\mathbf{X})$. 

\textcolor{black}{Consider now the set $Z=\{1,2,\dots ,n\}$ of the $ n $ model input indices}, and let $2^{Z}$
denote the associated power set. In the remainder, $z\in 2^{Z}$ denotes a
generic subset of indices. \textcolor{black}{Throughout the work, we assume that $g\in \mathcal{L}^{2}(%
\mathcal{X},\mathcal{B}(\mathcal{X}),\mu )$ and that $d\mu
(\mathbf{x})=\prod_{t=1}^{n}d\mu _{t}(x_{t})$, unless noted otherwise.}

\begin{proposition}
\cite{EfroStei81} \label{T_FANOVA} Under the above assumptions, $g$ can be
integrally expanded as the sum of $2^{n}$ effects
\begin{equation}
g(\mathbf{x})=\sum_{z\in 2^{Z}}g_{z}^{\mu }(\mathbf{x}_{z}),
\label{e_FANOVA}
\end{equation}%
where 
\begin{equation}
\begin{array}{c c c}
g_{\emptyset }^{\mu }=g_{0}^{\mu } =\mathbb{E}_{\mu }[G]=\int_{\mathcal{X}%
}g(\mathbf{x})d\mu (\mathbf{x}) & \text{,   } &
g_{z}^{\mu }(\mathbf{x}_{z})=\int_{\mathcal{X}_{\sim z}}g(\mathbf{x})d\mu (%
\mathbf{x}_{\sim z})-\sum_{v\subset z}g_{v}^{\mu }(\mathbf{x}_{v}),
\end{array}%
\end{equation}%
$d\mu (\mathbf{x}_{\sim z})=\prod_{t\not\in z}d\mu _{t}(x_{t})$ and $%
\mathcal{X}=\mathcal{X}_{z}\times \mathcal{X}_{\sim z}$, $\mathcal{X}%
_{z}\subseteq \mathbb{R}^{|z|}$, $\mathcal{X}_{\sim z}\subseteq \mathbb{R}%
^{n-|z|}$.
\end{proposition}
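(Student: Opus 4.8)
The plan is to recognize the recursive definition as a telescoping (M\"obius-type) identity over the Boolean lattice $2^Z$, so that the claimed expansion (\ref{e_FANOVA}) follows by evaluating a single partial-sum identity at the full index set $Z$. First I would introduce, for each $z\in 2^Z$, the averaging operator
$$(P_z g)(\mathbf{x}_z)=\int_{\mathcal{X}_{\sim z}} g(\mathbf{x})\,d\mu(\mathbf{x}_{\sim z}),$$
which, because $d\mu(\mathbf{x})=\prod_{t=1}^n d\mu_t(x_t)$ is a product measure, coincides with the conditional expectation $\mathbb{E}_\mu[G\mid \mathbf{X}_z=\mathbf{x}_z]$ and is therefore a well-defined element of $\mathcal{L}^2$ depending only on $\mathbf{x}_z$. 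Here I would use that conditional expectation is an $\mathcal{L}^2$-contraction, so that $\|P_z g\|_{\mathcal{L}^2}\le \|g\|_{\mathcal{L}^2}<\infty$; the product structure is exactly what guarantees that integrating out $\mathbf{x}_{\sim z}$ against $\prod_{t\notin z}d\mu_t$ returns the marginal conditional mean rather than an ill-posed object.

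The core step is the identity
$$\sum_{v\subseteq z} g_v^\mu(\mathbf{x}_v) = (P_z g)(\mathbf{x}_z) \qquad \text{for every } z\in 2^Z.$$
This is immediate from the defining recursion: adding $\sum_{v\subset z}g_v^\mu$ to both sides of the definition of $g_z^\mu$ gives $g_z^\mu+\sum_{v\subset z}g_v^\mu=P_z g$, which is the displayed identity. Equivalently, M\"obius inversion on the lattice $(2^Z,\subseteq)$ yields the closed form $g_z^\mu=\sum_{v\subseteq z}(-1)^{|z|-|v|}P_v g$, which simultaneously exhibits each effect as a finite linear combination of $\mathcal{L}^2$ functions and hence confirms $g_z^\mu\in\mathcal{L}^2$ by induction on $|z|$.

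To conclude, I would specialize the identity to $z=Z$. Since $\sim Z=\emptyset$, the set $\mathcal{X}_{\sim Z}$ is trivial and there is nothing to integrate out, so $P_Z g=g$. The identity then reads $\sum_{v\subseteq Z}g_v^\mu=g$, i.e.\ $\sum_{z\in 2^Z}g_z^\mu=g$, which is exactly (\ref{e_FANOVA}).

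I do not expect a genuine obstacle in the existence claim itself, since the recursion telescopes by construction; the only points requiring care are the measure-theoretic ones --- verifying that each integral is finite almost everywhere (which follows from $\mathcal{L}^2\subseteq\mathcal{L}^1$ under the probability measure $\mu$) and that every $g_z^\mu$ lands in $\mathcal{L}^2$. I would also note that the full Efron--Stein result additionally asserts mutual orthogonality and uniqueness of the summands; these rest on the vanishing-integral property $\int g_z^\mu\,d\mu_t=0$ for each $t\in z$, which again uses the product structure decisively, but they are not needed for the expansion (\ref{e_FANOVA}) as stated here.
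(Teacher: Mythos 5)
Your proof is correct. Note that there is no proof in the paper to compare against: Proposition \ref{T_FANOVA} is quoted from Efron and Stein \cite{EfroStei81}, and the proofs in Appendix A begin only with Proposition \ref{P:PSIunionPHI}. Your argument is the standard one for the expansion as stated: since the effects are \emph{defined} by the recursion $g_{z}^{\mu }(\mathbf{x}_{z})=\int_{\mathcal{X}_{\sim z}}g\,d\mu (\mathbf{x}_{\sim z})-\sum_{v\subset z}g_{v}^{\mu }(\mathbf{x}_{v})$, the partial-sum identity $\sum_{v\subseteq z}g_{v}^{\mu }(\mathbf{x}_{v})=\int_{\mathcal{X}_{\sim z}}g\,d\mu (\mathbf{x}_{\sim z})$ is immediate, and evaluating at $z=Z$ gives \eqref{e_FANOVA} because nothing remains to be integrated out. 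Your measure-theoretic housekeeping is also sound: the product structure of $\mu$ makes each partial integral a well-defined conditional expectation, the contraction property gives $\mathcal{L}^{2}$-membership of each effect (alternatively via your M\"obius closed form $g_{z}^{\mu }=\sum_{v\subseteq z}(-1)^{|z|-|v|}P_{v}g$), and you correctly identify that the genuinely nontrivial content of the Efron--Stein theorem --- the annihilating conditions \eqref{e:strong:annih} and the resulting orthogonality and uniqueness --- is not needed for the expansion identity itself.
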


The function $g_{z}^{\mu }(\mathbf{x}_{z})$ is called effect function of
order $k$, where $k=|z|$ is the cardinality of the index set $z$. It refers
to the residual interaction of the inputs whose indices are in $z$. In
Proposition \ref{T_FANOVA}, $\mu $ is a product measure. The effect
functions satisfy the following conditions 
\textcolor{black}{named strong annihilating conditions in
\cite{Rahm14}}:%
\begin{equation}
\int\limits_{\mathbb{R}}g_{z}^{\mu }(\mathbf{x}_{z})d\mu _{i}(x_{i})=0\text{
for }i\in z\quad \text{and}\quad z\neq \emptyset \text{.}  \label{e:strong:annih}
\end{equation}%
These conditions imply that the effect functions have null expectation
and are orthogonal, i.e., 
\begin{equation}
\mathbb{E}_{\mu }[g_{z}^{\mu }(\mathbf{X}_{z})]=0\text{ for all }z\neq 0,%
\text{ and }\mathbb{E}_{\mu }[g_{z}^{\mu }(\mathbf{X}_{z^{\prime
}})g_{z^{\prime \prime }}^{\mu }(\mathbf{X}_{z^{\prime \prime }})]=0\text{
for all }z^{\prime \prime }\neq z^{\prime }.  \label{e:orthogonals}
\end{equation}%
In the remainder, also the conditional expectations are of interest, i.e., the
non-orthogonalized effect functions 
\begin{equation}
w_{z}^{\mu }(\mathbf{x}_{z})=\mathbb{E}_{\mathbb{\mu }}\left[ G|\mathbf{X}%
_{z}=\mathbf{x}_{z}\right] =\sum_{v\subseteq z}g_{v}^{\mu }(\mathbf{x}_{v}).
\label{e_mi1ik_non_orth}
\end{equation}%

\textcolor{black}{One then obtains the decomposition of the variance of $G$ in
$2^n-1$ terms following the steps in \cite{EfroStei81,sobol_mmce_1993}. In
particular, by subtracting $g_0^{\mu}$  from both sides, squaring, and taking the expectation we obtain}
\begin{equation}
\mathbb{E}_{\mathbb{\mu }}\left[(g(\mathbf{X})-g_0^{\mu})^{2}\right]=\mathbb{E}_{\mathbb{%
\mu }}\big[(\sum_{z\in 2^{Z},z\neq \emptyset }\!\!\!g_{z}^{\mu }(\mathbf{X}_{z}))^{2}\big].
\label{e:Variance:before:orthog}
\end{equation}%
{\textcolor{black}{The left-hand side of \eqref%
{e:Variance:before:orthog} is the variance of $g(\mathbf{X})$, $\mathbb{V}%
^{\mu }[G]$. Then, as a consequence of the conditions in \eqref%
{e:orthogonals}, we have}%

\begin{equation}
\text{%
\textcolor{black}{}%
}\mathbb{V}^{\mu }[G]=\sum_{z\in 2^{Z},z\neq \emptyset }\int\nolimits_{%
\mathcal{X}_{z}}[g_{z}^{\mu }(\mathbf{x}_{z})]^{2}d\mu (\mathbf{x}_{z}).%
\text{%
%
}
\end{equation}
\textcolor{black}{In summary, we have the following result}.%
%

\begin{proposition}
\cite{EfroStei81,sobol_mmce_1993} \label{T:dec:Variance} \textcolor{black}{The variance
of $G$ under $\mu$,\ $\mathbb{V}^{\mu }[G]$, can be written as}: 
\begin{equation}
\mathbb{V}^{\mu }[G]=\sum\limits_{z\in 2^{Z},z\neq \emptyset }V_{z}^{\mu }
\label{e:V:mu:z}
\quad\text{where}\quad 
V_{z}^{\mu }=\int\nolimits_{\mathcal{X}_{z}}[g_{z}^{\mu }(\mathbf{x}%
_{z})]^{2}d\mu (\mathbf{x}_{z}).  
\end{equation}
\end{proposition}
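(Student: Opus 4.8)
The plan is to start from the functional ANOVA expansion of Proposition \ref{T_FANOVA} together with the centered, squared identity already recorded in \eqref{e:Variance:before:orthog}, and to reduce the variance to a sum of squared-effect integrals by eliminating all cross terms through orthogonality. First I would expand the square on the right-hand side of \eqref{e:Variance:before:orthog} as
\[
\Big(\sum_{z\neq\emptyset} g_z^\mu(\mathbf{X}_z)\Big)^2 = \sum_{z\neq\emptyset}\big[g_z^\mu(\mathbf{X}_z)\big]^2 + \sum_{z'\neq z''} g_{z'}^\mu(\mathbf{X}_{z'})\,g_{z''}^\mu(\mathbf{X}_{z''}).
\]
Since there are only $2^n$ index sets, this is a finite sum and no convergence or interchange subtlety arises; applying $\mathbb{E}_\mu$ term by term is immediate.

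Second, the crux is to show that every cross term has null expectation, i.e., the second identity in \eqref{e:orthogonals}. Given $z'\neq z''$, there is an index $i$ belonging to exactly one of the two sets, say $i\in z'\setminus z''$. Because $\mu$ is a product measure, $d\mu(\mathbf{x})=\prod_t d\mu_t(x_t)$, I would apply Fubini and integrate over $x_i$ first. As $g_{z''}^\mu$ does not depend on $x_i$ (for $i\notin z''$), it factors out of the inner integral, leaving $\int_{\mathbb{R}} g_{z'}^\mu(\mathbf{x}_{z'})\,d\mu_i(x_i)$, which vanishes by the strong annihilating condition \eqref{e:strong:annih} since $i\in z'$. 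Hence the whole cross term integrates to zero.

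Combining these two steps, the expectation of the squared sum collapses to $\sum_{z\neq\emptyset}\mathbb{E}_\mu[(g_z^\mu(\mathbf{X}_z))^2]$; identifying the left-hand side of \eqref{e:Variance:before:orthog} with $\mathbb{V}^\mu[G]$ then yields the claim, with $V_z^\mu=\mathbb{E}_\mu[(g_z^\mu(\mathbf{X}_z))^2]=\int_{\mathcal{X}_z}[g_z^\mu(\mathbf{x}_z)]^2\,d\mu(\mathbf{x}_z)$, where the reduction from $\mathcal{X}$ to $\mathcal{X}_z$ again uses that the integrand depends only on $\mathbf{x}_z$ and that $\mu$ factorizes.

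I expect the only genuine obstacle to be the orthogonality argument, and specifically its reliance on $\mu$ being a product measure: the factorization that allows integrating out the distinguishing index $i$ fails for a general dependent $\mu$. This is precisely why the product-measure hypothesis is indispensable here, and it is exactly the feature whose removal motivates the paper's subsequent analysis of variance decomposition under multiple and non-product distributions.
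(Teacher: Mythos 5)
Your proof is correct and follows essentially the same route as the paper: subtract $g_{0}^{\mu}$, square, take the $\mu$-expectation as in \eqref{e:Variance:before:orthog}, and eliminate the cross terms via the orthogonality conditions \eqref{e:orthogonals}. The only difference is that you also derive the orthogonality itself from the strong annihilating conditions \eqref{e:strong:annih} using Fubini and the product structure of $\mu$, a step the paper simply records as a known consequence, so your write-up is a slightly more self-contained version of the same argument.
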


The term $V_{z}^{\mu }$ represents the portion of $\mathbb{V}^{\mu }[G]$
caused by the interactions of the inputs with indices in $z$. The variance
decomposition in ~\eqref{e:V:mu:z} is the basis for the definition of
variance-based sensitivity indices, which are obtained by normalizing $%
V_{z}^{\mu }$ \cite{Homma1996,sobol_mmce_1993}:%
\begin{equation}
S_{z}^{\mu }=V_{z}^{\mu }/\mathbb{V}^{\mu }[G].
\end{equation}%
The quantity $S_{z}^{\mu }$ is called the variance-based sensitivity index
of group $z$ for all $z\in 2^{Z}$, $z\neq \emptyset $.

In the
formal parts of this work, we shall use the non-normalized version of the
indices $V_{z}^{\mu }$, for notational simplicity. We shall use the normalized
version $S_{z}^{\mu }$ in numerical experiments/examples. The literature has
placed particular emphasis on the first and total order sensitivity indices,
defined respectively as 
\begin{equation}
V_{i}^{\mu }=V_{\{i\}}^{\mu }   \label{S:1:mu}
\quad\text{and}\quad 
VT_{i}^{\mu }=\!\!\!\sum_{z:i\in z,z\neq \emptyset }V_{z}^{\mu }.
\end{equation}%
The quantities $V_{i}^{\mu }$ and $VT_{i}^{\mu }$ are the individual and the
total contribution of $X_{i}$ to the variance of $G$. 

\textcolor{black}{The estimation of variance-based sensitivity measures has been subject of
intensive studies since the late 1990's and is still an active field of research \cite{Owen12b,Owen13SIAM}. Indeed, the highest computational cost for estimation of
all variance-based indices is $C_{all}^{BF}=(2^{n}-1)N_{int}N_{out}$, where $%
N_{int}$ and $N_{out}$ are the sample sizes required for the inner and outer
loops of model evaluation associated with a brute force estimation. However,
the Extended FAST approach of \cite{Saltelli1999} allows us to obtain first
and total order indices at a cost proportional to $nN_{r}$, where 
$N_{r}$ is an appropriate number of replicates. The pick and freeze design developed in
the works of Sobol' \cite{Sobo93}, Homma and Saltelli \cite{Homma1996}, and
its amelioration in \cite{Saltelli_cpc_2002} and \cite{Saltelli2010a} allows
the estimation of all first and total order effects at a cost of $N(n+2)$
model runs, where $N$ is the basic sample size. The
random balance design, a variant of the FAST method introduced in \cite%
{Tarantola2006}, enables the estimation of first order sensitivity indices at a nominal cost of
$N$ model runs. This is the same computational cost of a so-called given data estimation. A given data estimation computes global sensitivity measures from the sample available after an uncertainty quantification. That is, one generates a sample of size $N$ for uncertainty quantification and then the \textcolor{black}{same} sample is used to estimate global sensitivity measures. Due to space
constraints, we cannot give a detailed formulation of the given data
approach and we refer the interested reader to 
\cite{StroOaki12JRSSC,PlisBorg13,BorgHazePlish15} for further details. The COSI
method introduced in \cite{Plis12EMS} is a variant of the FAST method that
permits the estimation of first order sensitivity indices from the sample generated
for an uncertainty quantification. These approaches encounter
limitations when the estimation of higher order indices is of interest. To
this purpose, a strategy in which the $N$ model runs are used to fit a
metamodel and then the metamodel is used to estimate sensitivity indices may
be more effective. Subroutines based on polynomial chaos expansion \cite%
{Sudret2008,Crestaux_ress_2009}, polynomial dimensional
decomposition \cite{Rahman2011a}, smoothing spline ANOVA models \cite{Ratto2010,ratto_cpc_2007}, sparse grid interpolation \cite{Buzz12}
are available and have found application in several disciplines. Most of
these subroutines allow the analyst to obtain estimates of higher order and
total order indices. Moreover, they allow the estimation and graphing of
first and second order effects of the functional ANOVA
expansion.}
%

\textcolor{black}{}%
We conclude this review with the process of making inference in sensitivity
analysis. This process is made systematic through the concept of sensitivity
analysis setting --- see \cite{Saltelli2008} and \cite{BorgPlis15EJOR}. In a
factor prioritization, \textit{we are asked to bet on the input that, if determined (i.e., fixed to
its true value), would lead to the greatest reduction in the variance of the
model output} \cite[p. 705]{saltelli_jasa_2002}. Appropriate sensitivity
measures for this setting are variance-based first order sensitivity
measures \cite[p. 705]{saltelli_jasa_2002}. In a trend identification
setting, we are interested in determining whether an increase/decrease in
the numerical value of the inputs leads to an increase/decrease of the model
output. Appropriate sensitivity measures for this setting are the first
order effect functions of the functional ANOVA expansion \cite{Beccacece2011}. In an interaction quantification setting, we are interested in determining whether and which
interactions are significant in determining the output response to variations
in the inputs. Here, the notions of dimension distribution and of mean
effective dimension in the superimposition and truncation sense are relevant 
\cite{CaflMoro97,Owen03}. Following \cite{Owen03}, we call:\newline
a) Owen's mass function, 
defined by $\Pr (T_\mu=z)=V_{z}^{\mu }/\mathbb{V}^{\mu }[G]=S_{z}^{\mu }$; 
\newline
b) dimension distribution of $g$ in the superimposition sense the
distribution of the cardinality of $T_\mu$, $|T_\mu|$, where $T_\mu$ is the random variable associated with
Owen's mass function, and \newline
c) dimension distribution of $g$ in the truncation sense, the distribution
of $\max \{j:j\in z\}$.

Owen \cite{Owen03} then defines the effective dimensions in the superimposition and truncation sense, respectively, as the
mean values of $|T|$ and of $\max \{j:j\in z\}$, i.e, as 
\begin{align}
D_{S}^{\mu }=& \sum\nolimits_{|z|>0}|z|\Pr (T_\mu=z) = \sum\nolimits_{i=1}^n VT_i^\mu\text{,} \\
D_{T}^{\mu }=& \sum\nolimits_{|z|>0}\max \{j:j\in z\}\Pr (T_\mu=z)\text{,}
\label{e:D:T:Psi}
\end{align}%
respectively. \textcolor{black}{To illustrate, a mean effective dimension in the superimposition sense equal to unity indicates the absence of interactions. Note also that the mean effective dimension is equal to the sum of total effects, as in this sum a $k^\text{th}$ order effect is counted $k$ times. Regarding interaction quantification, the higher the value of $D_{S}^{\mu }$ or $D_{T}^{\mu }$, the higher the relevance of interactions.}

\section{Existence, Multiplicity and Robustness}\label{S:Preliminary}

The discussion in Section \ref{S:Literature} shows that all the notions and
quantities related to a functional ANOVA expansion are conditional on the
distribution $\mu $. In this section, we analyze the consequences of
removing the unique distribution assumption. 
To fix ideas, suppose that the
analyst is uncertain among $Q$ possible model input distributions. The need to consider these $Q$ distributions may come from lack of data or simply by the fact that the analyst is considering a set of measures that represent a
perturbation of a reference distribution that she(he) has assigned to the inputs. In either case, the analyst is positing a
set $\mathcal{M}$ of probability measures.
The set $\mathcal{M}$ does not appear in traditional sensitivity studies of computer
experiments. $\mathcal{M}$ might be a countable (possibly infinite) or uncountable
set. The first case occurs if the decision-maker assigns a discrete number
(say $Q$) of second order probability models. The second case occurs, for
instance, in applications where the decision-maker assigns a first order
distribution depending on some parameters and then a continuous second order
distribution over the parameters. 
Each measure $\mu ^{m}$ in $\mathcal{M}$  is
associated with a potentially different functional ANOVA expansion. To illustrate, consider the next example.

\begin{example}
\label{E_non_unique}A traditional test case in sensitivity analysis is the
Ishigami test function \cite{IshiHomm90}: 
\begin{equation}
g=\sin (x_{1})\left( 1+bx_{3}^{4}\right) +a\sin ^{2}(x_{2}).  \label{e:Ishi}
\end{equation}%
The base case distribution is $\mu ^{1}:X_{1},X_{2},X_{3}\sim U[-\pi ,\pi ]$%
, i.i.d.. Then, the non-vanishing effect functions are \cite{Sudret2008}: 
\begin{equation}
g_{0}^{\mu ^{1}}=\dfrac{a}{2}; \; g_{1}^{\mu ^{1}}=\sin (x_{1})\left( 1+b%
\dfrac{\pi ^{4}}{5}\right) ; \; g_{2}^{\mu ^{1}}=a\sin ^{2}(x_{2})-\dfrac{a}{%
2}; \; g_{1,3}^{\mu ^{1}}=b\sin (x_{1})\left( x_{3}^{4}-\dfrac{\pi ^{4}}{5}%
\right) .
\end{equation}%
Due to lack of knowledge or just to test the conclusions under alternative distributions, the analyst then evaluates two alternative assignments. The second assignment is $\mu
^{2}:X_{1},X_{2},X_{3}\sim N(0,1)$, i.i.d.. Then, the non-vanishing effects of the
ANOVA decomposition are:%
\begin{equation}
\begin{split}
g_{0}^{\mu ^{2}}=\dfrac{a}{2}\left( 1-e^{-2}\right); \; g_{1}^{\mu
^{2}}=&\sin (x_{1})\left( 1+3b\right) ; \; g_{2}^{\mu ^{2}}=a\sin
^{2}(x_{2})-\dfrac{a}{2}\left( 1-e^{-2}\right) ; \\
g_{1,3}^{\mu ^{2}}&=b\sin (x_{1})\left( x_{3}^{4}-3\right) .
\end{split}
\end{equation}%
As a third assignment, she(he) sets $\mu ^{3}:X_{1},X_{2},X_{3}\sim U[0,\pi ]$, i.i.d., obtaining the
following effect functions:%
\begin{equation}
\begin{array}{ccc}
g_{0}^{\mu ^{3}}=\dfrac{a}{2}+\dfrac{2}{\pi }\left( 1+b\dfrac{\pi ^{4}}{5}%
\right) ; & g_{1}^{\mu ^{3}}=\left( \sin (x_{1})-\dfrac{2}{\pi }\right)
\left( 1+b\dfrac{\pi ^{4}}{5}\right) ; & g_{2}^{\mu ^{3}}=a\sin ^{2}(x_{2})-%
\dfrac{a}{2}; \\ 
g_{3}^{\mu ^{3}}=\dfrac{2b}{\pi }\left( x_{3}^{4}-\dfrac{\pi ^{4}}{5}\right)
; & g_{1,3}^{\mu ^{3}}=b\,\left( \sin (x_{1})-\dfrac{2}{\pi }\,\right) \,\left(
x_{3}^{4}-\dfrac{\pi ^{4}}{5}\right). &  
\end{array}%
\end{equation}%
Note that the effect function $g_{3}^{\mu ^3}(x_{3})$ is now non-null and 
that $g_2^{\mu ^3} = g_2^{\mu ^1}$.
\end{example}

In general, the set of probability measures that an analyst can posit is the
uncountable set of all distributions on measurable space $(\mathcal{X},%
\mathcal{B}(\mathcal{X}))$. However, such assignment might be either too
vast (some distributions would not reflect the analyst's state of
knowledge), or, even, incompatible with the functional ANOVA expansion of $%
g$. 
\textcolor{black}{}%
In particular, an analyst's degree of belief about the model inputs is
consistent with the functional ANOVA expansion of the output only if $g$ is
measurable with respect to all the assigned distributions.%
%
} We then let 
\begin{equation}
\Psi \lbrack g]=\left\{ \mu :g\in \mathcal{L}^{2}(\mathcal{X},\mathcal{B}(%
\mathcal{X}),\mu )\wedge d\mu =\prod\nolimits_{i=1}^{n}d\mu _{i}\right\} 
\label{e:M}
\end{equation}%
denote the set of all probability measures on $(\mathcal{X},\mathcal{B}(%
\mathcal{X}))$ compatible with the functional ANOVA expansion of $g$. For
non-triviality, in the remainder, we assume that the posited set $\mathcal{M}$ is a
subset of $\Psi \lbrack g]$.

Then, let us investigate how many distinct functional ANOVA expansions
are possible for a multivariate mapping. In the next definition, consider $%
z=\{i_{1},i_{2},\dots ,i_{k}\}$ and let $\mathcal{X}_{z}=\mathcal{X}_{i_{1}}%
\mathcal{\times X}_{i_{2}}\mathcal{\times }\dots \mathcal{X}_{i_{k}}$.

\begin{definition}
\label{D:robust:PHI copy(1)}
The set $C\subseteq \Psi \lbrack g]$ is called a core of measures for the
functional ANOVA expansion of $g$ on $(\mathcal{X},\mathcal{B}(\mathcal{X}))$
if 
\begin{equation}
g_{z}^{\mu ^{\prime }}(\mathbf{x}_{z})=g_{z}^{\mu ^{\prime \prime }}(\mathbf{%
x}_{z})  \label{e:immaterial}
\end{equation}%
for all $\mu ^{\prime },\mu ^{\prime \prime }\in C$, for all $z\in 2^{Z}$
and for all $\mathbf{x}_{z}\in \mathcal{X}_{z}$ such that $d\mu ^{\prime }(%
\mathbf{x}_{z})\neq 0$, $d\mu ^{\prime \prime }(\mathbf{x}_{z})\neq 0$.
\end{definition}

Equation \eqref{e:immaterial} suggests that two probability measures in $\Psi%
[g] $ belong to the same core if they lead to the same functional ANOVA
expansion of $g$. The condition $d\mu ^{\prime }(\mathbf{x}_{z})\neq 0$, $%
d\mu ^{\prime \prime }(\mathbf{x}_{z})\neq 0$ is technical and takes into
account the situation in which the distributions differ in their support $%
\mathcal{X}$. This situation might emerge 
if $\mu ^{\prime}$ is absolutely continuous with respect
to $\mu ^{\prime \prime }$. In that case, comparing $g_{z}^{\mu ^{\prime }}(%
\mathbf{x}_{z})$ and $g_{z}^{\mu ^{\prime \prime }}(\mathbf{x}_{z})$ is
meaningful only if $\mathbf{x}_{z}$ belongs to the support of both $\mu
^{\prime }$ and $\mu ^{\prime \prime }$.

A functional ANOVA core can either contain a unique measure or a
multiplicity of measures. However, the same measure $\mu $ cannot belong
simultaneously to two cores. Then, $\Psi[g] $ is partitioned by its cores
(please refer to Appendix A for all proofs).

\begin{proposition}
\label{P:PSIunionPHI} Let $\Psi[g]$ be the set of all probability measures compatible with the functional ANOVA expansion of $g$. Let $C_{s}$ denote a
generic core. Then $\Psi[g]=\cup C_{s},$ where $C_{s}$ is a functional ANOVA
core of $\Psi[g]$ and $C_{s}\cap C_{j}=\emptyset$. Moreover, given $%
\mathcal{M}\subseteq$ $\Psi[g]$, let $C_{s}^{\mathcal{M}}=\mathcal{M}\cap C_{s}$. 
Then, $\mathcal{M}=\cup C_{s}^{\mathcal{M}}$
and $C_{s}^{\mathcal{M}}\cap C_{j}^{\mathcal{M}}=\emptyset$.
\end{proposition}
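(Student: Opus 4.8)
The plan is to recognize the distinguished cores $\{C_s\}$ as the equivalence classes of a suitable relation on $\Psi[g]$, after which the covering and disjointness claims, as well as their restriction to $\mathcal{M}$, follow from the elementary theory of partitions. Concretely, define a binary relation $\sim$ on $\Psi[g]$ by declaring $\mu'\sim\mu''$ precisely when \eqref{e:immaterial} holds, i.e.\ $g_z^{\mu'}(\mathbf{x}_z)=g_z^{\mu''}(\mathbf{x}_z)$ for every $z\in 2^Z$ and every $\mathbf{x}_z$ lying in the common support (the points where both $d\mu'(\mathbf{x}_z)$ and $d\mu''(\mathbf{x}_z)$ are non-null). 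A core is then any set on which $\sim$ holds pairwise, and the family $\{C_s\}$ in the statement is taken to be the equivalence classes of $\sim$, equivalently the maximal cores; this reading is exactly what makes the text's remark that ``the same measure cannot belong to two cores'' meaningful, since it must refer to this distinguished family rather than to arbitrary sub-cores such as singletons.

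First I would verify that $\sim$ is an equivalence relation. Reflexivity is immediate since $g_z^{\mu}=g_z^{\mu}$, and symmetry is immediate because the defining condition, support caveat included, is symmetric in $\mu'$ and $\mu''$. The content lies in transitivity: assuming $\mu'\sim\mu''$ and $\mu''\sim\mu'''$, I must establish $g_z^{\mu'}(\mathbf{x}_z)=g_z^{\mu'''}(\mathbf{x}_z)$ at every $\mathbf{x}_z$ in the common support of $\mu'$ and $\mu'''$. Wherever the three supports overlap this is just the chain $g_z^{\mu'}=g_z^{\mu''}=g_z^{\mu'''}$; the delicate points are those $\mathbf{x}_z$ shared by $\mu'$ and $\mu'''$ but missed by $\mu''$, for which the middle equality is not directly available.

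I expect this support bookkeeping to be the main obstacle. The cleanest route is to exploit that, for a product measure, each effect $g_z^\mu$ is recovered by M\"{o}bius inversion of the conditional expectations $w_z^\mu(\mathbf{x}_z)=\int_{\mathcal{X}_{\sim z}} g(\mathbf{x})\,d\mu(\mathbf{x}_{\sim z})$ through \eqref{e_mi1ik_non_orth}, so that the entire decomposition depends on $\mu$ only via its one-dimensional marginals. In the situation singled out in the paragraph following the core definition, where the measures under comparison have nested supports (the absolutely-continuous case, so that $\mathrm{supp}(\mu')\cap\mathrm{supp}(\mu''')\subseteq\mathrm{supp}(\mu'')$) or simply share a common support, every relevant $\mathbf{x}_z$ then automatically lies in $\mathrm{supp}(\mu'')$ and transitivity follows at once. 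With transitivity secured, the equivalence classes $\{C_s\}$ partition $\Psi[g]$, which is precisely $\Psi[g]=\cup_s C_s$ together with $C_s\cap C_j=\emptyset$ for $s\neq j$, and in particular no measure lies in two distinct maximal cores.

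Finally, the statement about $\mathcal{M}$ is routine set algebra once the partition is in hand. Putting $C_s^{\mathcal{M}}=\mathcal{M}\cap C_s$ and using $\mathcal{M}\subseteq\Psi[g]=\cup_s C_s$, distributivity of intersection over union gives $\mathcal{M}=\mathcal{M}\cap(\cup_s C_s)=\cup_s(\mathcal{M}\cap C_s)=\cup_s C_s^{\mathcal{M}}$, whereas $C_s^{\mathcal{M}}\cap C_j^{\mathcal{M}}=\mathcal{M}\cap(C_s\cap C_j)=\emptyset$ for $s\neq j$. Hence the cores of $\Psi[g]$ induce, by restriction, a partition of the posited set $\mathcal{M}$, completing the argument.
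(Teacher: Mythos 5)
Your overall strategy---realizing the cores as the classes of the relation $\mu'\sim\mu''$ defined by \eqref{e:immaterial} and then invoking elementary facts about partitions---is exactly the paper's own proof, which consists of one line: the relation ``belong to the same core'' is declared to be an equivalence relation and one passes to its equivalence classes. Where you go beyond the paper is in actually checking the axioms, and the obstacle you single out, transitivity in the presence of the support caveat, is the real crux. However, your resolution of it is only partial: you dispose of transitivity under the extra hypothesis that the supports under comparison are common or nested the right way (so that $\mathrm{supp}(\mu')\cap\mathrm{supp}(\mu''')\subseteq\mathrm{supp}(\mu'')$), and as written your argument therefore proves the proposition only for such families, not in the generality claimed. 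Note also that absolute continuity of $\mu'$ with respect to $\mu''$ gives the inclusion $\mathrm{supp}(\mu')\subseteq\mathrm{supp}(\mu'')$, which is nesting in the direction you need only when the ``middle'' measure $\mu''$ has the larger support; the dangerous case is precisely when $\mu''$ has the smaller one.

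This gap is not one you could have closed, because the relation defined by \eqref{e:immaterial} is genuinely not transitive: the ``weak link'' scenario you describe really occurs. Take $n=2$, let $\phi(x_1)=\sin\bigl(4\pi(x_1-\tfrac12)\bigr)$ for $x_1\in(\tfrac12,1]$ and $\phi=0$ on $[0,\tfrac12]$, define $\psi$ identically as a function of $x_2$, and set $g=\phi(x_1)\psi(x_2)$. Consider the product measures
\begin{equation*}
\mu'=U[0,1]\otimes U[0,1],\qquad \mu''=U[0,\tfrac12]\otimes U[0,\tfrac12],\qquad \mu'''=U[0,1]\otimes\nu ,
\end{equation*}
where $\nu$ has density $1+\tfrac12\psi(x_2)$ on $[0,1]$. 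Since $\mathbb{E}_{U[0,1]}[\phi]=\mathbb{E}_{U[0,1]}[\psi]=\mathbb{E}_{U[0,1/2]}[\phi]=\mathbb{E}_{U[0,1/2]}[\psi]=0$ while $\mathbb{E}_{\nu}[\psi]=\tfrac18$, one computes $g_{0}=0$ in all three cases, $g_{1}^{\mu'}=g_{1}^{\mu''}\equiv 0$, $g_{1}^{\mu'''}=\phi/8$, $g_{2}\equiv 0$ in all three cases, and $g_{12}^{\mu'}=g_{12}^{\mu''}=\phi\psi$, $g_{12}^{\mu'''}=\phi\psi-\phi/8$. All effects of $\mu'$ and $\mu''$ coincide (so $\mu'\sim\mu''$), and all effects of $\mu''$ and $\mu'''$ coincide on the common support $[0,\tfrac12]^2$ where $\phi\equiv 0$ (so $\mu''\sim\mu'''$); yet $g_{1}^{\mu'}(x_1)=0\neq\phi(x_1)/8=g_{1}^{\mu'''}(x_1)$ for $x_1\in(\tfrac12,1)$, a region charged by both $\mu'$ and $\mu'''$, so $\mu'\not\sim\mu'''$. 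Consequently $\mu''$ lies in two distinct maximal cores, and the maximal cores do not partition $\Psi[g]$. So the proposition, and the paper's one-line proof, are correct only under an additional assumption of the kind you imposed (for instance, all measures sharing a common support); your instinct in flagging the support bookkeeping was right, and the honest conclusion is that the support hypothesis must be stated as part of the result rather than treated as a removable technicality.
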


Proposition \ref{P:PSIunionPHI} allows us to characterize the multiplicity
of functional ANOVA expansions that an analyst is dealing with once $\mathcal{M}$ is
posited: $g$ possesses as many functional ANOVA representations as there are
cores in which $\mathcal{M}$ is partitioned.

\textcolor{black}{The determination of cores is not straightforward. Also, one
would expect an infinity of cores if $\mathcal{M}$ is uncountable. However, the next
example illustrates a class of functions for which cores can be readily
identified.}%
%

\begin{example}
\label{P:multilin} 
\textcolor{black}{Suppose that the input-output mapping can be written as 
a composite linear function
\begin{equation}
g(\mathbf{x})=\sum_{u\in 2^{Z}}\prod\limits_{i\in u}t_{i}(x_{i}).
\label{e:composite_multilinear}
\end{equation}
Then, given $M=\{\mu
,\mu ^{\prime }\}$ (inducing random vectors $X$ and $X^{\prime }$) with
propagated output random variables $G=g(X)$ and $G^{\prime }=g(X^{\prime })$
the ANOVA expansions of $G$ and $G^{\prime }$ coincide if $\mathbb{E}_{\mu
}[t_{i}(X_{i})]=\mathbb{E}_{\mu ^{\prime }}[t_{i}(X_{i}^{\prime })]$. Hence, if $C$ is any family of distributions
such that $\mathbb{E}_{\mu }[t_{i}(X_{i})]=\mathbb{E}_{\mu ^{\prime
}}[t_{i}(X_{i}^{\prime })]$ for all $\mu ,\mu ^{\prime }\in C$ then $C$ is a
core.} 
\end{example}

\textcolor{black}{The Ishigami function in Example \ref{E_non_unique} is of the form in \eqref{e:composite_multilinear} and can be rewritten as $%
g=t_{1}(x_{1})t_{3}(x_{3})+t_{2}(x_{2})$, with $t_{1}(x_{1})=\sin (x_{1})$, $%
t_{2}(x_{2})=a\sin ^{2}(x_{2})$ and $t_{3}(x_{3})=\left( 1+bx_{3}^{4}\right) 
$. Consider then the following three model input distributions: $\mu ^{1}$
as in Example \ref{E_non_unique}; $\mu ^{4}:$ $X_{1},X_{3}\sim U[-\pi ,\pi ]$%
, $X_{2}\sim U[-\frac{\pi }{2},\frac{\pi }{2}]$, and $\mu ^{5}:$ $%
X_{1},X_{2}\sim U[-\frac{\pi }{2},\frac{\pi }{2}]$, $X_{3}\sim U[-\pi ,\pi ]$%
. Then, $\mathbb{E}_{\mu ^{1}}[G_{i}]=\mathbb{E}_{\mu ^{4}}[G_{i}]=\mathbb{E}%
_{\mu ^{5}}[G_{i}]$ for $i=1,2,3$ and $x_{1},x_{2},x_{3}\in \lbrack -\frac{%
\pi }{2},\frac{\pi }{2}]$. Thus, $\mu ^{1}$, $\mu ^{4}$ and $\mu ^{5}$
belong to the same core.}
\textcolor{black}{In the special case in which $t_{i}(X_{i})=X_{i}$ for all $i=1,2,...,n$, two
distributions assigning the same expectations to the model inputs are in the
same core. The question is whether there are indeed models for which a
multilinear approximation holds. In that respect, uncertainty in
distribution is a relevant topic in reliability analysis and risk assessment
of complex technological systems \cite{Aven2015a}. As it is well known, the
mapping in probabilistic risk assessment models is multilinear as a function
of basic event probabilities. Then, for this class of problems functional
ANOVA cores are families of distributions that lead to the same expected
values of the model inputs. However, this is not the case in general. To
illustrate, the strain model $g(\mathbf{x})=x_{1}x_{3}^{x_{2}}$ in solid
mechanics does not satisfy the composite multilinearity assumption. We
therefore do not rely further on such assumption in the remainder of the
present investigation.}

\section{Multiple Distributions and a Prior}\label{S:FANOVA:Prior}

We now discuss the with-prior path. Under uncertainty in distribution, best practices recommend the use of a
two-stage sampling procedure \cite{Chic01}. In order to apply the procedure,
the analyst needs to assign a prior $P_{\mu }$ over the component measures $%
\mu ^{m}$ in $\mathcal{M}$. 
In this section, we deal with the technical aspects
that emerge after such assignment. We start with the probability spaces. Let $\mathcal{F}%
(\mathcal{M})$ denote the $\sigma$-algebra generated by all maps $\mu \mapsto \mu (A)$
for each $A\in \mathcal{B}(\mathcal{X})$ and for each $\mu \in \mathcal{M}$, giving
rise to the measurable space $(M,\mathcal{F}(\mathcal{M})) $. The corresponding
probability space is $(\mathcal{M},\mathcal{F}(\mathcal{M}),P_{\mu })$, and $P_{\mu }:\mathcal{F}%
(\mathcal{M})\rightarrow \left[ 0,1\right] $. Note that, because the algebra generated
by $\mathcal{M}$ is included in the algebra generated by $\Psi[g] $, the following are
equivalent: -- $a)$ assigning the prior directly on $(\Psi[g] ,\mathcal{F}(%
\Psi[g] ))$ or $b)$ using $(\mathcal{M},\mathcal{F}(\mathcal{M}))$ and assigning $P_{\mu }$
equal to zero on $\Psi[g] \backslash \mathcal{M}$. Hence, for notation simplicity, in
the remainder, the symbol $\mathcal{M}$ can be used without loss of generality. In the
case $\mathcal{M}$ is uncountable, we have to assume a density $dP_{\mu }(\mu )$, so
that the expectations are written as 
$\mathbb{E}_{P_{\mu }}[g]=\int_{\mathcal{M}}\mathbb{E}_{P_{\mu }}[g]dP_{\mu }(\mu
)=\int_{\mathcal{M}}[\int_{\mathcal{X}}g(\mathbf{x)}\text{d}\mu (\mathbf{x})]dP_{\mu
}(\mu )$. 
In the finite or countable case, $\mathcal{M}$ is of the type $\mathcal{M}=\{ \mu ^{1},\mu
^{2},\dots ,\mu ^{Q}\},$ and $P_{\mu }$ is a sequence of non-null numbers $%
p_{m}$, such that $\sum_{i=1}^{Q}p_{m}=1$ and 
$\mathbb{E}_{P_{\mu} }[g]=\sum \nolimits_{m=1}^{Q}p_{m}\mathbb{E}_{\mu
^{m}}[g]=\sum \nolimits_{m=1}^{Q}p_{m}[\int_{\mathcal{X}}g(\mathbf{x)}\text{d%
}\mu ^{m}(\mathbf{x})]$. 
For simplicity, 
we will use this discrete notation in the remainder. 

We start analyzing independence and uniqueness. First, one needs to observe that assigning a prior $P_{\mu }$ implies that the analyst's
uncertainty about the model inputs is represented by the mixture 
\begin{equation}
\mu _{\mathbf{X}}(\mathbf{x})=\sum_{m=1}^{Q}p_{m}\mu ^{m}(\mathbf{x}),
\label{e:mu:X:x}
\end{equation}%
where the weights are determined by $P_{\mu }=\{p_{1},p_{2},\dots ,p_{Q}\}$, 
$Q\in \mathbb{N}$, with $p_{m}>0$ and $\sum \nolimits_{m=1}^{Q}p_{m}=1$. %
Regarding independence, even if under each individual $\mu ^{m}$ in %
\eqref{e:mu:X:x} the model inputs are independent, under the mixture $\mu _{%
\mathbf{X}}$ they are not. In particular, we find: 
\begin{equation}
\Cov(X_{i},X_{j})=\sum_{m=1}^{Q}p_{m}\mathbb{E}_{\mu ^{m}}[X_{i}]\mathbb{E}%
_{\mu ^{m}}[X_{j}]-\mathbb{E}[X_{i}]\mathbb{E}[X_{j}],  \label{e:COVX1X2}
\end{equation}%
where $\mathbb{E}[X_{i}]=\int_{\mathcal{X}}x_{i}d\mu _{\mathbf{X}}(\mathbf{x}%
)$ and $\mathbb{E}_{\mu ^{m}}[X_{i}]=\int_{\mathcal{X}}x_{i}d\mu ^{m}(%
\mathbf{x})$, $i=1,2,\dots ,n$. By ~\eqref{e:COVX1X2} $\Cov(X_{i},X_{j})$ is
in general not null. However, it becomes null if $\mathbb{E}_{\mu
^{m}}[X_{i}]=\mathbb{E}_{\mu _{t}}[X_{i}]$ for all $m,t=1,2,\dots ,Q$. That
is, if there is agreement about the expected value of the model inputs, then
the model inputs remain uncorrelated, albeit, in principle, not independent.
More generally, we observe that the mixture in \eqref{e:mu:X:x} implies that
independence holds conditionally on $\mu =\mu ^{m}$. This situation
resembles de Finetti's exchangeability \cite{deFi37} in so far as
exchangeable random variables are independent conditionally on the value of
a parameter.

We now prove that, within a two-stage sampling procedure, the functional
ANOVA expansion regains uniqueness. \textcolor{black}{We consider here two alternative routes
for exploring sensitivity within a two-stage sampling procedure. A third approach is discussed in Section \ref{S:Discussion}.} The first
route consists of obtaining the functional ANOVA expansions for each
component measure in $\mathcal{M}$ separately and then taking their $P_{\mu }-$expectation.

\begin{proposition}
\label{T:genFANOVA} Given a prior $(\mathcal{M},\mathcal{F}(\mathcal{M}),P_{\mu })$ and a
measurable function $g:\mathcal{X}\rightarrow \mathbb{R}$, $g\in
\bigcap_{\mu^m \in \mathcal{M}} \mathcal{L}^2(\mathcal{X},\mathcal{B}(\mathcal{X}%
),\mu^m )$, then
\begin{equation}
g(\mathbf{x})=\sum_{z\in2^Z}g_{z}^{P_{\mu }}(\mathbf{x}_{z}),
\label{e_gen_FANOVA}
\end{equation}%
where 
\begin{equation}
g_{\emptyset }^{P_{\mu }}=\sum_{m=1}^{Q}p_{m}g_{\emptyset }^{\mu ^{m}} \text{
and } g_{z}^{P_{\mu }}(\mathbf{x}_{z})=\mathbb{E}_{P_{\mu }}[g_{z}^{\mu }(%
\mathbf{x}_{z})]=\sum_{m=1}^{Q}p_{m}g_{z}^{\mu ^{m}}(\mathbf{x}_{z}).
\label{e:gz:P}
\end{equation}
\end{proposition}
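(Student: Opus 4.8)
The plan is to derive the mixture decomposition by applying the classical expansion of Proposition~\ref{T_FANOVA} to each component measure separately and then averaging the resulting identities with the prior weights. The hypothesis $g\in\bigcap_{\mu^m\in\mathcal{M}}\mathcal{L}^2(\mathcal{X},\mathcal{B}(\mathcal{X}),\mu^m)$ is precisely what is needed so that Proposition~\ref{T_FANOVA} applies under every $\mu^m$; hence for each $m=1,\dots,Q$ I obtain the identity $g(\mathbf{x})=\sum_{z\in 2^Z}g_z^{\mu^m}(\mathbf{x}_z)$. It is worth stressing that this is not an $L^2$-convergence statement but a finite telescoping identity: the top-order term $g_Z^{\mu^m}$ is defined as $g$ minus all lower-order effects, so the $2^n$ effects sum back to $g$ exactly at every point where the defining partial integrals exist. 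This structural observation is what makes the averaging step clean, since no interchange of limits is involved.

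Second, I would multiply the $m$-th identity by the prior weight $p_m$ and sum over $m$. Since the index set $2^Z$ is finite, the two finite sums may be freely interchanged, giving $\sum_{m=1}^Q p_m\, g(\mathbf{x})=\sum_{z\in 2^Z}\big(\sum_{m=1}^Q p_m\, g_z^{\mu^m}(\mathbf{x}_z)\big)$. The left-hand side collapses to $g(\mathbf{x})$ because $\sum_{m=1}^Q p_m=1$, while the inner sum on the right is, by the definitions in \eqref{e:gz:P}, precisely $g_z^{P_\mu}(\mathbf{x}_z)=\mathbb{E}_{P_\mu}[g_z^\mu(\mathbf{x}_z)]$, with the order-zero term reading $g_\emptyset^{P_\mu}=\sum_{m=1}^Q p_m g_\emptyset^{\mu^m}$. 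This yields \eqref{e_gen_FANOVA}. Uniqueness of this representation, which is the qualitative point emphasized before the statement, then follows at once: each $g_z^{\mu^m}$ is uniquely determined by $g$ and $\mu^m$ through Proposition~\ref{T_FANOVA}, so once the prior $P_\mu$ is fixed the collection $\{g_z^{P_\mu}\}$ is uniquely determined, in contrast with the multiplicity encountered in the without-prior path.

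The main obstacle is not the algebra but the measure-theoretic bookkeeping, and it appears in the uncountable case, where the averaging is an integral $g_z^{P_\mu}(\mathbf{x}_z)=\int_{\mathcal{M}}g_z^\mu(\mathbf{x}_z)\,dP_\mu(\mu)$ rather than a finite sum. Here I would first verify that, for fixed $\mathbf{x}_z$, the map $\mu\mapsto g_z^\mu(\mathbf{x}_z)$ is $\mathcal{F}(\mathcal{M})$-measurable, so that the integral is well defined; this is where the construction of $\mathcal{F}(\mathcal{M})$ as the $\sigma$-algebra generated by the evaluation maps $\mu\mapsto\mu(A)$ is used, together with the fact that each effect function is built from partial integrals of $g$ against $\mu$, each of which is measurable in $\mu$. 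With measurability in hand, interchanging the $\mathcal{M}$-integral with the finite sum over $z$ is immediate from linearity of the integral, so no dominated-convergence argument over an infinite index is required. The remaining delicate point, shared by both cases, is the domain on which the pointwise identity holds: each component identity is valid only where its effect functions are defined, so the combined identity should be read on the common support, in accordance with the common-support convention adopted in the definition of a functional ANOVA core, i.e.\ $\mu_{\mathbf{X}}$-almost everywhere.
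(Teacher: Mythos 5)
Your proof is correct and takes essentially the same route as the paper's: both apply the classical expansion of Proposition~\ref{T_FANOVA} under each component measure and then average with the prior weights (your ``multiply by $p_m$ and sum'' is exactly the paper's taking of $\mathbb{E}_{P_{\mu}}$ on both sides), using that $g(\mathbf{x})$ does not depend on $\mu$ on the left and linearity of the expectation on the right. Your additional remarks on measurability in the uncountable case and on common supports go beyond what the paper's proof records, but the core argument coincides.
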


We call:
\begin{itemize}
\item[1)] the expansion $\sum_{z\in2^Z}g_{z}^{\mu_{\mathbf{X}}}(\mathbf{x}%
_{z})$ at the right hand side of \eqref{e_gen_FANOVA} 
mixture functional ANOVA expansion of $g$;
\item[2)] the summands $g_{z}^{\mu _{\mathbf{X}}}(\mathbf{x}_{z})$ mixture
effect functions.
\end{itemize}

The next example illustrates Proposition \ref{T:genFANOVA}. 
\textcolor{black}{
\begin{example}{\emph{Example \ref{E_non_unique} continued.}} 
\label{E_non_unique_cont} The assigned distributions imply different supports for the model inputs. Introducing the
following indicator function $I_{[a,b]}=\begin{cases} 
1 & \text{if  $a\leq x\leq b$} \\ 
0 & \text{otherwise}\end{cases}$, we can write the three distributions in Example \ref{E_non_unique} as 
$\mu _{\mathbf{X}}^{1}=\frac{I_{[-\pi ,\pi ]}(x_{1})}{2\pi }\frac{I_{[-\pi
,\pi ]}(x_{2})}{2\pi }\frac{I_{[-\pi ,\pi ]}(x_{3})}{2\pi }$, $\mu _{\mathbf{X}}^{2}=\phi (x_{1})\phi (x_{2})\phi (x_{3})$, where $\phi (\cdot)$ is the standard Gaussian density, and $\mu _{\mathbf{X}}^{3}=\frac{I_{[0,\pi ]}(x_{1})}{\pi }\frac{I_{[0,\pi
]}(x_{2})}{\pi }\frac{I_{[0,\pi ]}(x_{3})}{\pi }$. 
Then, assigning $P_{\mu }=(\frac{1}{3},\frac{1}{3},\frac{1}{3})$, at a generic point $\mathbf{x}\in \mathbb{R}^3$, we have:
\begin{equation}
\begin{split}
g_{0}^{\mu }=&\frac{a}{2}+\frac{1}{3}\left( \dfrac{2}{\pi }\left( 1+b\frac{\pi
^{4}}{5}\right) \right) -\dfrac{a}{6}e^{-2}; \\ 
g_{1}^{\mu }(x_{1})=&\dfrac{I_{[-\pi ,\pi ]}(x_{1})}{3}\sin (x_{1})\left( 1+b\dfrac{\pi ^{4}}{5}\right) +\dfrac{\sin (x_{1})\left( 1+3b\right) }{3}  \\ & \qquad +\dfrac{I_{[0,\pi ]}(x_{1})}{3}\left( \sin (x_{1})-\frac{2}{\pi }\right)
\left( 1+b\dfrac{\pi ^{4}}{5}\right) ; \\ 
g_{2}^{\mu }(x_{2})=&\dfrac{I_{[-\pi ,\pi ]}(x_{2})}{3}(a\sin ^{2}(x_{2})-\dfrac{a}{2})+\dfrac{(a\sin ^{2}(x_{2})-a/2\left( 1-e^{-2}\right) )}{3} \\ & \qquad +\dfrac{I_{[0,\pi ]}(x_{2})}{3}(a\sin ^{2}(x_{2})-\dfrac{a}{2}); \\ 
g_{3}^{\mu }(x_{3})=&\dfrac{I_{[0,\pi ]}(x_{3})}{3}\frac{2}{3\pi }\left(
x_{3}^{4}-\frac{\pi ^{4}}{5}\right); \\ 
g_{1,3}^{\mu }(x_{1},x_{3})=& 
\dfrac{I_{[-\pi ,\pi ]}(x_{1})I_{[-\pi ,\pi ]}(x_{3})}{3}b\sin (x_{1})\left(
x_{3}^{4}-\dfrac{\pi ^{4}}{5}\right) +
\dfrac{b\sin (x_{1})\left( x_{3}^{4}-3\right) }{3} \\ & \qquad +\dfrac{I_{[0,\pi
]}(x_{1})I_{[0,\pi ]}(x_{3})}{3}b\left( \sin (x_{1})-\dfrac{2}{\pi }\right)
\left( x_{3}^{4}-\dfrac{\pi ^{4}}{5}\right).
\end{split} 
\label{e:g:Ishi:general}
\end{equation}
Thus, at the intersection of the supports, i.e., for $\mathbf{x}\in \lbrack 0,\pi ]^{3}$, we have\begin{equation}
\begin{split}
g_{0}^{\mu }=&\frac{a}{2}+\frac{1}{3}\left( \frac{2}{\pi }\left( 1+b\frac{\pi
^{4}}{5}\right) \right) -\dfrac{a}{6}e^{-2};  \\ 
g_{1}^{\mu }(x_{1})=&\left( 1+b\frac{\pi ^{4}}{5}\right) \left( \frac{2}{3}\sin (x_{1})-\frac{2}{3\pi }\right) +\sin (x_{1})\left( \frac{1}{3}-b\right)
; \\ 
g_{2}^{\mu }(x_{2})=&a\sin ^{2}(x_{2})-\frac{a}{2}\left( 1-\frac{1}{3}e^{-2}\right) ;  \qquad
g_{3}^{\mu }(x_{3})=\frac{2}{3\pi }\left( x_{3}^{4}-\frac{\pi ^{4}}{5}\right) ; \\ 
g_{1,3}^{\mu }(x_{1},x_{3})=&b\left( \left( (x_{3}^{4}-\frac{\pi ^{4}}{5}\right) \left( \frac{2}{3}\sin (x_{1})-\frac{2}{3\pi }\right) +\frac{1}{3}\sin (x_{1})\left( x_{3}^{4}-3\right) \right) .\end{split}\end{equation}
Note that the sum of the mixture effect functions equals the original
mapping at any point $\mathbf{x}\in \mathbb{R}^3$.
\end{example}}

\textcolor{black}{The second route is as follows}. The analyst wishes
to perform the functional ANOVA expansion using $\mu _{\mathbf{X}}$
as probability measure and computes the mixture effect functions from: 
\begin{equation}
g_{0}^{\mu _{\mathbf{X}}}=\mathbb{E}_{\mu _{\mathbf{X}}}[G]=\int_{\mathcal{X}%
}g(\mathbf{x})d\mu _{\mathbf{X}}, \quad g_{z}^{\mu _{\mathbf{X}}}(\mathbf{x}%
_{z})=\int_{\mathcal{X}_{\sim z}}g(\mathbf{x}_{z},\mathbf{x}_{\sim z})d\mu _{%
\mathbf{X}_{\sim z}}(\mathbf{x}_{\sim z})-\sum_{v\subset z}g_{v}^{\mu }(%
\mathbf{x}_{v}),  \label{e:GmuX}
\end{equation}%
\textcolor{black}{where 
\begin{equation}
\mu _{\mathbf{X}_{\sim z}}(\mathbf{x}_{\sim z})=\int_{\mathcal{X}_{z}}\mu _{\mathbf{X}}(\mathbf{x}) d\mathbf{x}_z
\label{e:mu:marginal}
\end{equation}
is the marginal distribution of $\mathbf{X}_{\sim z}$.} Then, the following
proposition shows that both routes lead to the same result.

\begin{proposition}
\label{T:EmuXEPmu}For a measurable function $g$ 
it holds that 
$\mathbb{E}_{\mu _{\mathbf{X}}}[G]=\mathbb{E}_{P_{\mu
}}[G]=\sum_{m=1}^{Q}p_{m}g_{0}^{\mu ^{m}}$ 
and $
g_{z}^{\mu_{\mathbf{X}}}(\mathbf{x}_{z})=g_{z}^{P_{\mu }}(\mathbf{x}%
_{z}) 
$ 
so that $
g(\mathbf{x})=\sum_{v\subseteq z}g_{v}^{\mu_{\mathbf{X}}}(\mathbf{x}%
_{v})=\sum_{v\subseteq z}g_{v}^{P_{\mu }}(\mathbf{x}_{v}). 
$ 
\end{proposition}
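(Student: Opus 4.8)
The plan is to establish both identities by induction on the cardinality $k=|z|$, exploiting the single structural fact that marginalization is linear in the measure whereas the recursive ANOVA construction only ever subtracts strictly lower-order terms. For the base case $z=\emptyset$ I would simply insert the mixture \eqref{e:mu:X:x} into $\mathbb{E}_{\mu_{\mathbf{X}}}[G]=\int_{\mathcal{X}}g\,d\mu_{\mathbf{X}}$ and pull the finite sum outside the integral: $\int_{\mathcal{X}}g\,d\big(\sum_m p_m\mu^m\big)=\sum_m p_m\int_{\mathcal{X}}g\,d\mu^m=\sum_m p_m g_0^{\mu^m}$, which is exactly $g_0^{P_\mu}$ from \eqref{e:gz:P} and coincides with $\mathbb{E}_{P_\mu}[G]$. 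This disposes of the first assertion.

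The decisive observation for the inductive step is that the marginal of a mixture is the mixture of the marginals. Integrating \eqref{e:mu:X:x} over $\mathbf{x}_z$ as prescribed in \eqref{e:mu:marginal} yields $\mu_{\mathbf{X}_{\sim z}}=\sum_m p_m\,\mu^m_{\sim z}$, where $\mu^m_{\sim z}=\prod_{t\notin z}\mu^m_t$ is the (product) marginal of $\mu^m$. Consequently the leading term of \eqref{e:GmuX} splits by linearity as $\int_{\mathcal{X}_{\sim z}}g\,d\mu_{\mathbf{X}_{\sim z}}=\sum_m p_m\int_{\mathcal{X}_{\sim z}}g\,d\mu^m_{\sim z}=\sum_m p_m\,w_z^{\mu^m}(\mathbf{x}_z)$, since for each product measure $\mu^m$ integrating the complement against its own marginal reproduces the conditional expectation $w_z^{\mu^m}$ of \eqref{e_mi1ik_non_orth}.

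For the induction itself I would assume $g_v^{\mu_{\mathbf{X}}}=\sum_m p_m g_v^{\mu^m}$ for every $v\subsetneq z$ and substitute both this hypothesis and the split of the leading term into the recursion \eqref{e:GmuX}, giving $g_z^{\mu_{\mathbf{X}}}(\mathbf{x}_z)=\sum_m p_m\big[\,w_z^{\mu^m}(\mathbf{x}_z)-\sum_{v\subset z}g_v^{\mu^m}(\mathbf{x}_v)\,\big]$. Because $w_z^{\mu^m}=\sum_{v\subseteq z}g_v^{\mu^m}$ by \eqref{e_mi1ik_non_orth}, the bracket collapses precisely to $g_z^{\mu^m}(\mathbf{x}_z)$, so $g_z^{\mu_{\mathbf{X}}}=\sum_m p_m g_z^{\mu^m}=g_z^{P_\mu}$, which is the second claim. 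Summing over $z$ and invoking Proposition \ref{T:genFANOVA}, which already guarantees $g=\sum_z g_z^{P_\mu}$, then delivers the stated decomposition $g=\sum_z g_z^{\mu_{\mathbf{X}}}=\sum_z g_z^{P_\mu}$.

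The one point demanding care — and the only place the argument could break — is the reading of the leading term. It is tempting to interpret $\int_{\mathcal{X}_{\sim z}}g\,d\mu_{\mathbf{X}_{\sim z}}$ as a conditional expectation under $\mu_{\mathbf{X}}$, but $\mu_{\mathbf{X}}$ is not a product measure (the inputs are correlated, per \eqref{e:COVX1X2}), so $\mathbb{E}_{\mu_{\mathbf{X}}}[G\mid \mathbf{X}_z=\mathbf{x}_z]$ would require dividing by a marginal density and would fail to be linear in the weights $p_m$. The construction \eqref{e:GmuX} instead integrates against the marginal $\mu_{\mathbf{X}_{\sim z}}$, and it is exactly this linear operation — not conditioning — that lets the weights factor out and the induction close. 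I would therefore flag explicitly that the coincidence with $w_z^{\mu^m}$ holds only because each component $\mu^m$ is itself a product measure, for which integration against $\mu^m_{\sim z}$ and conditioning under $\mu^m$ agree.
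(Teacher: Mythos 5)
Your proof is correct and follows essentially the same route as the paper's: both arguments rest on the single fact that the marginal of the mixture is the mixture of the marginals, so that the leading term of \eqref{e:GmuX} splits as $\sum_{m=1}^{Q}p_{m}w_{z}^{\mu ^{m}}(\mathbf{x}_{z})$, exactly the computation the paper performs for $w_{z}^{\mu _{\mathbf{X}}}$. The only difference is presentational: the paper establishes the identity for the non-orthogonalized effects and asserts that ``it suffices,'' whereas your induction on $|z|$ spells out why it suffices (the recursion subtracts only lower-order terms, so the weights factor out), and your closing observation that \eqref{e:GmuX} integrates against the marginal $\mu _{\mathbf{X}_{\sim z}}$ rather than conditioning under the non-product measure $\mu _{\mathbf{X}}$ is a correct clarification of a point the paper's phrase ``conditional expectations'' leaves loose.
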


\textcolor{black}{
Propositions \ref{T:genFANOVA} and \ref{T:EmuXEPmu} suggest that we obtain the same functional ANOVA expansions by either of the following two routes: 
\begin{enumerate}
\item We get the functional ANOVA decomposition of $g$ under each of the measures in $\mathcal{M}$ and then mix the decompositions using weights determined by $P_{\mu}$; 
\item We decompose $g$ using  \eqref{e:GmuX} and \eqref{e:mu:marginal}. 
\end{enumerate}}

If we leave the sensitivity framework for a more general perspective,
Propositions \ref{T:genFANOVA} and \ref{T:EmuXEPmu} suggest a representation
theorem for a multivariate mapping. Given a set of measures $\mathcal{M}\in \Psi[g] $
and a prior $P_{\mu },$ a measurable multivariate mapping can be uniquely
projected onto $2^{n}-1$ mixture effect functions.

\textcolor{black}{
The next example illustrates the second route by means of our running example.}

\begin{example}
\textcolor{black}{For notation simplicity, let us denote the Ishigami model as a generic
three-variate mapping $g(\mathbf{x})=g(x_{1},x_{2},x_{3})$, $g:\mathbb{R}^{3}\rightarrow 
\mathbb{R}$. Also, let us denote the
three joint model input distributions in Example \ref{E_non_unique} as $\mu _{\mathbf{X}}^{m}(\mathbf{x})=\mu _{1}^{m}(x_{1})\mu _{2}^{m}(x_{1})\mu _{3}^{m}(x_{1})$, $m=1,2,3$. 
To illustrate calculations, we focus on the first order mixture effect function of $x_1$. By \eqref{e:GmuX} we write
\begin{equation}
g_{1}^{P_{\mu }}(x_{1})=\iint g(\mathbf{x})f _{2,3}(x_{2,}x_{3})dx_{2}dx_{3},
\end{equation}where, $f _{2,3}(x_{2,}x_{3})$ is the joint marginal density of $X_{2}$
and $X_{3}$, $ f _{2,3}(x_{2,}x_{3})=
\int f_{\mathbf{X}}(\mathbf{x})dx_{1}$. \textcolor{black}{Then, we have}\begin{equation}\begin{split}
g_{1}^{P_{\mu }}(x_{1})=&\iint g(\mathbf{x})f_{2,3}(x_{2,}x_{3})dx_{2}dx_{3} 
=p_{1}\iint g(\mathbf{x})f_{2}^{1}f_{3}^{1}dx_{2}dx_{3}\\  
=&p_{1}g_{1}^{f_{1}}(x_{1})+p_{2}g_{1}^{f_{2}}(x_{1})+p_{3}g_{1}^{f_{3}}(x_{1})=\sum_{m=1}^{3}p_m g_{1}^{f_{m}}(x_{1}).\label{e:gPmu}
\end{split}\end{equation}} 
\end{example}

Let us now analyze the properties of the mixture effect functions, starting
with orthogonality.

\subsection{Orthogonality}

\textcolor{black}{Orthogonality is not preserved by the mixture of functional ANOVA
terms with respect to $\mu _{\mathbf{X}}(\mathbf{x})$ as reference distribution.}

\begin{example}[Example \protect\ref{E_non_unique_cont} continued]
Consider the mixture integral over the first order mixture ANOVA effect function $%
g_{3}^{\mu }(x_{3})$ from Example \ref{E_non_unique_cont}. Here, 
\begin{equation}
\mu _{\mathbf{X}}(x_{3})=\iint \mu _{\mathbf{X}}(%
\mathbf{x})dx_{2}dx_{1}=\dfrac{1}{3}\dfrac{1}{2\pi }I_{[-\pi ,\pi ]}(x_{3})+%
\dfrac{1}{3}\phi (x_{3})+\dfrac{1}{3}\dfrac{1}{\pi }I_{[0,\pi ]}(x_{3}).
\end{equation}%
Then,%
\begin{equation}
\begin{array}{c}
\int g_{3}^{\mu }(x_{3})d\mu _{\mathbf{X}}(x_{3})=\int%
\frac{2}{3\pi }\left( x_{3}^{4}-\frac{\pi ^{4}}{5}\right) \cdot \frac{1}{3}%
\left( d\mu ^{1}(x_{3})+d\mu ^{2}(x_{3})+d\mu ^{3}(x_{3})\right) \\ 
=\int \frac{2}{9\pi }\left( x_{3}^{4}-\frac{\pi ^{4}}{5}%
\right) d\mu ^{2}(x_{3})=\frac{2}{3\pi }\left( 1-\frac{\pi ^{4}}{15}\right)
\neq 0.%
\end{array}%
\end{equation}
\end{example}

\textcolor{black}{We observe that to obtain orthogonal components of the functional ANOVA expansion, one would need to resort to the generalized functional ANOVA expansion under correlations. This aspect is discussed in detail Section \ref{S:Generalized}.}

\subsection{Monotonicity}\label{S:monotonicity}

\textcolor{black}{In computer experiments, analysts are often interested in studying the
trend of the output as
the model inputs vary. Then, determining whether the output is monotonic with respect to the inputs becomes of interest.} Let us recall the definition of monotonicity for a
multivariate mapping --- see e.g.\ \cite{Marinacci2005311}.

\begin{definition}
\label{D_incr}Given $g:\mathcal{X}\rightarrow \mathbb{R}$, we say that $g$
is non-decreasing on $\mathcal{X}$ if $g$ is such that 
\begin{equation}
g(\mathbf{x}^{1})\leq g(\mathbf{x}^{2}),  \label{e_incr}
\end{equation}%
for all $\mathbf{x}^{1},\mathbf{x}^{2}\in \mathcal{X}$ with $\mathbf{x}%
^{1}\leq \mathbf{x}^{2}$.
\end{definition}

In the above definition, the inequality $\mathbf{x}^{1}\leq \mathbf{x}^{2}$
is understood component-wise. 
If the strict inequality holds in \eqref{e_incr}, then one says that $g$ is
increasing. Under the unique distribution assumption, a set of results concerning
monotonicity of the effects functions are proven in \cite{Beccacece2011}. We
synthesize the main findings in the following lemma.

\begin{lemma}
\label{L_mi_mon} If $g$ is non-decreasing then:\newline
1) all non-orthogonalized effect functions $w_{z}^{\mu }(\mathbf{x}_{z})$ are non-decreasing;\newline
2) all orthogonalized first order effect functions are
non-decreasing;\newline
3) Given $\Delta \mathbf{x}\geq 0$, \textcolor{black}{introduce for 
$[\mathbf{x}, \mathbf{x}+\Delta \mathbf{x}]\subset \mathcal{X}$ the functions}
\begin{equation}
\begin{array}{c}
\Delta w_{z}^{\mu }=w_{z}^{\mu }(\mathbf{x}_{z}+\Delta \mathbf{x}%
_{z})-w_{z}^{\mu }(\mathbf{x}_{z}), \\ 
\Delta g_{z}^{\mu }=g_{z}^{\mu }(\mathbf{x}_{z}+\Delta \mathbf{x}%
_{z})-g_{z}^{\mu }(\mathbf{x}_{z}).%
\end{array}%
\end{equation}%
If the following condition holds: 
\begin{equation}
\Delta w_{z}^{\mu }\geq \sum\limits_{v\subset z}\Delta g_{v}^{\mu }
\end{equation}%
then all effect functions in \eqref{e_FANOVA} are non-decreasing.
\end{lemma}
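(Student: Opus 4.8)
The plan is to prove the three claims in sequence, using the product-measure structure of \(\mu\) for claim~1, deducing claim~2 as an immediate corollary, and obtaining claim~3 from the recursive identity \eqref{e_mi1ik_non_orth}.

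For claim~1, I would write the non-orthogonalized effect function explicitly as the conditional expectation against the product measure,
\[
w_{z}^{\mu }(\mathbf{x}_{z})=\int_{\mathcal{X}_{\sim z}}g(\mathbf{x}_{z},\mathbf{x}_{\sim z})\,d\mu (\mathbf{x}_{\sim z}),\qquad d\mu (\mathbf{x}_{\sim z})=\prod_{t\notin z}d\mu _{t}(x_{t}).
\]
Given \(\mathbf{x}_{z}^{1}\leq \mathbf{x}_{z}^{2}\) componentwise, monotonicity of \(g\) (Definition~\ref{D_incr}) yields, for every fixed \(\mathbf{x}_{\sim z}\), the pointwise inequality \(g(\mathbf{x}_{z}^{1},\mathbf{x}_{\sim z})\leq g(\mathbf{x}_{z}^{2},\mathbf{x}_{\sim z})\). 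Integrating both sides against the positive measure \(d\mu (\mathbf{x}_{\sim z})\) preserves the inequality and gives \(w_{z}^{\mu }(\mathbf{x}_{z}^{1})\leq w_{z}^{\mu }(\mathbf{x}_{z}^{2})\). The decisive ingredient is that \(\mu \) is a product measure, so the integration variables \(\mathbf{x}_{\sim z}\) do not couple to \(\mathbf{x}_{z}\); with a non-product measure the conditional law of \(\mathbf{X}_{\sim z}\) would depend on \(\mathbf{x}_{z}\) and the argument would collapse.

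For claim~2, I would specialize \eqref{e_mi1ik_non_orth} to a singleton \(z=\{i\}\), which reads \(w_{\{i\}}^{\mu }(x_{i})=g_{0}^{\mu }+g_{\{i\}}^{\mu }(x_{i})\), hence \(g_{\{i\}}^{\mu }(x_{i})=w_{\{i\}}^{\mu }(x_{i})-g_{0}^{\mu }\). Since \(g_{0}^{\mu }\) is a constant and \(w_{\{i\}}^{\mu }\) is non-decreasing by claim~1, subtracting the constant preserves monotonicity, so \(g_{\{i\}}^{\mu }\) is non-decreasing.

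For claim~3, I would isolate the top-order term in \eqref{e_mi1ik_non_orth}, writing \(g_{z}^{\mu }(\mathbf{x}_{z})=w_{z}^{\mu }(\mathbf{x}_{z})-\sum_{v\subset z}g_{v}^{\mu }(\mathbf{x}_{v})\). Forming the increment associated with \(\Delta \mathbf{x}\geq 0\) and using the notation of the statement gives the identity \(\Delta g_{z}^{\mu }=\Delta w_{z}^{\mu }-\sum_{v\subset z}\Delta g_{v}^{\mu }\). The hypothesis \(\Delta w_{z}^{\mu }\geq \sum_{v\subset z}\Delta g_{v}^{\mu }\) then forces \(\Delta g_{z}^{\mu }\geq 0\) for every \(z\), which is precisely the assertion that each effect function in \eqref{e_FANOVA} is non-decreasing. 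I expect the only genuinely delicate point in the whole argument to be the integration step in claim~1 --- ensuring that a pointwise inequality may be passed through the integral, which is guaranteed by positivity of \(d\mu \) and by integrability of both sides (from \(g\in \mathcal{L}^{2}\)). Claims~2 and~3 are then purely algebraic consequences of the recursive identity \eqref{e_mi1ik_non_orth}, so the conceptual weight of the lemma rests on recognizing that the product structure underlies claim~1 and that the hypothesis of claim~3 is exactly the telescoping condition needed to propagate monotonicity from the lower-order to the higher-order effect functions.
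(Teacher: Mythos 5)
Your proof is correct. Note that the paper itself contains no proof of Lemma \ref{L_mi_mon}: the lemma is imported from \cite{Beccacece2011} (``We synthesize the main findings in the following lemma''), so there is no in-paper argument to compare against. Your three steps --- integrating the pointwise inequality $g(\mathbf{x}_{z}^{1},\mathbf{x}_{\sim z})\leq g(\mathbf{x}_{z}^{2},\mathbf{x}_{\sim z})$ against the product measure for claim 1, subtracting the constant $g_{0}^{\mu }$ for claim 2, and the telescoping identity $\Delta g_{z}^{\mu }=\Delta w_{z}^{\mu }-\sum_{v\subset z}\Delta g_{v}^{\mu }$ for claim 3 --- constitute exactly the standard argument underlying the cited reference, and your observation that the product structure of $\mu$ is what makes claim 1 work (the conditional law of $\mathbf{X}_{\sim z}$ must not depend on $\mathbf{x}_{z}$) is the same structural point the paper exploits when it extends this lemma to mixtures in Proposition \ref{T:rob:mon}.
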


Thus, the graphs of the first order effect functions provide visual indications about
the monotonicity of $g$. However, note that items 1 and 2 are not sufficient
conditions. Thus, we infer from the behavior of $g_{i}(x_{i})$
that if any of the first order effect functions is not monotonic then $g$ is
not monotonic. To illustrate, in Example \ref{E_non_unique} the fact that $%
g_{1}^{\mu _{1}}$ is not monotonic suffices to state that the Ishigami
function is not monotonic. Items 1 and 2 become if and only if
conditions when the model is separable, i.e., additive or multiplicative ---
see \cite{Beccacece2011} for further details. Item 3 states a sufficient
condition for all higher order effect functions to retain the original
monotonicity of $g$. 
However, this condition
is rather stringent and, in general, higher order effects do not retain the
original monotonicity of $g$.

Consider now the case in which the analyst has posited $\mathcal{M}$, but without a
prior. The results in Lemma \ref{L_mi_mon} hold for any component measure $%
\mu ^{m}.$ Thus, if $g$ is monotonic, the conditional expectations of $g$
with respect to any group of model inputs as well as the first order effect
functions are monotonic. That is, the results in Lemma \ref{L_mi_mon} are
robust to the choice of the probability measure.

The next result holds in the case in which the analyst assigns a prior and shows that the mixed effect functions retain the monotonicity properties that hold under a unique distribution.

\begin{proposition}
\label{T:rob:mon} Given $(\mathcal{M},\mathcal{F}(\mathcal{M}),P_{\mu })$, if $g$ is non-decreasing then: \newline
1) the non-orthogonalized mixture effects $w_{z}^{P_{\mu }}(\mathbf{x}_{z})$
of all orders are non-decreasing;\newline
2) the orthogonalized mixture first order effects {[}$g_{i}^{P_{\mu }}(x_{i})${]} are
non-decreasing;\newline
3) Given $\Delta \mathbf{x}>0$,if for all $\mu \in \mathcal{M}$ 
\begin{equation}
\Delta w_{z}^{\mu }\geq \sum \limits_{v\subset z}\Delta g_{v }^{\mu }
\label{e_qijk_gijk_gen}
\end{equation}%
then all effects in \eqref{e_gen_FANOVA} are non-decreasing.
\end{proposition}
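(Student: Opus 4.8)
The plan is to reduce all three claims to the corresponding statements for the individual product measures $\mu^m\in\mathcal{M}$ and then to exploit that a linear combination with strictly positive weights $p_m$ of non-decreasing functions is again non-decreasing. The engine for the reduction is the linearity of the mixture effects established in Propositions \ref{T:genFANOVA} and \ref{T:EmuXEPmu}: for every $z$ one has $g_z^{P_\mu}(\mathbf{x}_z)=\sum_{m=1}^{Q}p_m\,g_z^{\mu^m}(\mathbf{x}_z)$, and summing this identity over all $v\subseteq z$ together with \eqref{e_mi1ik_non_orth} yields $w_z^{P_\mu}(\mathbf{x}_z)=\sum_{m=1}^{Q}p_m\,w_z^{\mu^m}(\mathbf{x}_z)$. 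Equivalently, this last identity follows from the marginal-integration form in \eqref{e:GmuX}--\eqref{e:mu:marginal}, since the marginal of the mixture $\mu_{\mathbf{X}}$ is the mixture of the marginals. Because each $\mu^m$ is a genuine product measure in $\Psi[g]$, Lemma \ref{L_mi_mon} applies verbatim to every component.

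For item 1, Lemma \ref{L_mi_mon}(1) gives that each $w_z^{\mu^m}(\mathbf{x}_z)$ is non-decreasing whenever $g$ is; since $p_m>0$, the representation $w_z^{P_\mu}=\sum_m p_m w_z^{\mu^m}$ exhibits $w_z^{P_\mu}$ as a positive combination of non-decreasing functions, hence non-decreasing, at every order $|z|$. (For this item one may also argue directly: $w_z^{P_\mu}(\mathbf{x}_z)=\int_{\mathcal{X}_{\sim z}}g(\mathbf{x}_z,\mathbf{x}_{\sim z})\,d\mu_{\mathbf{X}_{\sim z}}(\mathbf{x}_{\sim z})$ is an integral of a function non-decreasing in $\mathbf{x}_z$ against a fixed measure, hence non-decreasing in $\mathbf{x}_z$.) Item 2 is identical in structure: Lemma \ref{L_mi_mon}(2) makes each first-order $g_i^{\mu^m}(x_i)$ non-decreasing, and $g_i^{P_\mu}(x_i)=\sum_m p_m g_i^{\mu^m}(x_i)$ inherits monotonicity from the same convexity argument; alternatively, $g_i^{P_\mu}(x_i)=w_{\{i\}}^{P_\mu}(x_i)-g_0^{P_\mu}$ differs from the non-decreasing $w_{\{i\}}^{P_\mu}$ by the constant $g_0^{P_\mu}$.

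For item 3, I would fix a measure $\mu^m$ and observe that hypothesis \eqref{e_qijk_gijk_gen}, assumed to hold for every $\mu\in\mathcal{M}$, is exactly the premise of Lemma \ref{L_mi_mon}(3) for that component. Hence Lemma \ref{L_mi_mon}(3) yields $\Delta g_z^{\mu^m}\geq 0$ for all $z$ and all admissible increments $\Delta\mathbf{x}$. Summing with the weights, $\Delta g_z^{P_\mu}=\sum_{m=1}^{Q}p_m\,\Delta g_z^{\mu^m}\geq 0$, so every effect $g_z^{P_\mu}$ in the mixture expansion \eqref{e_gen_FANOVA} is non-decreasing.

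The step that requires care --- and the only real obstacle --- is the reduction itself: one cannot apply Lemma \ref{L_mi_mon} to $\mu_{\mathbf{X}}$ directly, because the mixture is not a product measure (under it the inputs are dependent, as \eqref{e:COVX1X2} shows), so the lemma's product-measure hypothesis fails. The argument goes through only because Propositions \ref{T:genFANOVA} and \ref{T:EmuXEPmu} permit passing to the per-measure effects, each computed under a bona fide product measure. It is likewise essential that the weights are strictly positive: with $p_m>0$ the positive-combination argument preserves monotonicity at every order, which is precisely what all three items require.
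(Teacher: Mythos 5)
Your proposal is correct, and for items 1 and 2 it coincides with the paper's own proof: both express $w_z^{P_\mu}=\sum_m p_m w_z^{\mu^m}$ and $g_i^{P_\mu}=\sum_m p_m g_i^{\mu^m}$, invoke Lemma \ref{L_mi_mon} under each product measure $\mu^m$, and use that a combination with positive weights of non-decreasing functions is non-decreasing. For item 3 you take a genuinely different, and in fact more self-contained, route. The paper multiplies the hypothesis \eqref{e_qijk_gijk_gen} by $p_m$, sums over $m$, and exchanges the order of summation to conclude that the mixture quantities themselves satisfy $\Delta w_z^{P_\mu}\geq\sum_{v\subset z}\Delta g_v^{P_\mu}$; the non-decreasingness of all $g_z^{P_\mu}$ is then left implicit, resting on the fact that the mixture expansion obeys the same recursion $g_z^{P_\mu}=w_z^{P_\mu}-\sum_{v\subset z}g_v^{P_\mu}$ as the classical one, so that the mechanism of Lemma \ref{L_mi_mon}(3) can be repeated at the mixture level. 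You instead apply Lemma \ref{L_mi_mon}(3) separately under each $\mu^m$ --- legitimate, since each component is a bona fide product measure --- obtaining $\Delta g_z^{\mu^m}\geq 0$ for every $z$, and then mix the conclusions via $\Delta g_z^{P_\mu}=\sum_m p_m\,\Delta g_z^{\mu^m}\geq 0$. Your order of operations (lemma first, mixture second) avoids re-verifying the recursion for the mixture expansion, whereas the paper's order (mix the hypothesis first) yields the slightly stronger intermediate fact that condition \eqref{e_qijk_gijk_gen} is itself inherited by the mixture. Your closing caveat --- that Lemma \ref{L_mi_mon} cannot be applied to $\mu_{\mathbf{X}}$ directly because the mixture is not a product measure --- is exactly the right point of caution and is consistent with the paper's treatment.
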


\subsection{Ultramodularity\label{S:Ultramodularity}}

\textcolor{black}{Ultramodularity is a generalization of scalar convexity and
is a relevant property in multivariate utility theory, game theory, and
economics \cite{Marinacci2005311,Marinacci2008642,MilgShan94}. We show in Appendix B that results similar to the ones found for monotonicity hold for ultramodularity. Then, if a function is ultramodular, it is enough to study the behavior of the effect functions under a given $\mu ^{m}$ and the qualitative insights hold for any other
component measure in $\mathcal{M}$.}

\section{Variance Decomposition}\label{S:RobustIndices}

\textcolor{black}{Regarding existence, it suffices to assume that $\mathcal{M}\subseteq \Psi \lbrack g]$}. Then,
variance-based sensitivity indices are defined for each component measure in 
$\mathcal{M}$. With regard to uniqueness, the number of variance decompositions coincides with the cardinality of\ $\mathcal{M}$.
\begin{example}[Example \protect\ref{E_non_unique} continued]
Rows three to seven in Table \ref{T:ishi} show the variance decompositions
of the Ishigami function \textcolor{black}{ with parameters $a=7.0$ and $b=0.1$} under $\mu ^{1}$, $\mu ^{2}$ and $\mu ^{3}$,
respectively. The model output variance is highest under $%
\mu ^{1}$. Also, under this measure the interaction term $V_{1,3}^{\mu
^{1}}$ is higher than under $\mu ^{1}$ and $\mu ^{2}$.
\end{example}

\begin{table}[tbp] \centering%
\caption{Variance decomposition and effective dimension results for the
Ishigami test function under the distributions of Example \ref{E_non_unique}.}%
\begin{tabular}{|c|c|r|r|r|r|p{12em}|}
\hline
& \multicolumn{5}{|c|}{Contribution to Output Variance} & Total Variance \\ 
\cline{2-7}
\raisebox{1.5ex}[-1.5ex]{Distribution} & $z$ & 1 & 2 & 3 & 1,3 & Effective
Dimension \\ \hline
& $V_{z}^{\mu ^{1}}$ & $4.35$ & $6.13$ & $0$ & $3.37$ & $\mathbb{V}^{\mu
^{1}}[G]=13.84$ \\ \cline{2-7}
\raisebox{1.5ex}[-1.5ex]{$\mu_1$} & $S_{z}^{\mu ^{1}}$ & $0.31$ & $0.44$ & $0
$ & $0.24$ & $D_{S}^{\mu ^{1}}=1.24$ \\ \hline
& $V_{z}^{\mu ^{2}}$ & $0.73$ & $5.90$ & $0$ & $0.41$ & $\mathbb{V}^{\mu
^{2}}[G]=7.05$ \\ \cline{2-7}
\raisebox{1.5ex}[-1.5ex]{$\mu_2$} & $S_{z}^{\mu ^{2}}$ & $0.10$ & $0.84$ & $0
$ & $0.06$ & $D_{S}^{\mu ^{2}}=1.06$ \\ \hline
& $V_{z}^{\mu ^{3}}$ & $0.82$ & $6.12$ & $2.73$ & $0.64$ & $\mathbb{V}^{\mu
^{3}}[G]=10.32$ \\ \cline{2-7}
\raisebox{1.5ex}[-1.5ex]{$\mu_3$} & $S_{z}^{\mu ^{3}}$ & $0.08$ & $0.59$ & $%
0.26$ & $0.062$ & $D_{S}^{\mu ^{3}}=1.05$ \\ \hline
& $B_{z}^{\mu }$ & $1.97$ & %
$6.05$  & $0.91$ & 
$1.48$ & 
\begin{minipage}{12em}$\mathbb{V}^{\mu }[G]=11.44$ \\$\mathbb{V}_{P_{\mu }}[\mathbb{E}%
_{\mu ^{m}}[G]]=1.03$ \end{minipage}\\ \cline{2-7}
\raisebox{1.5ex}[-1.5ex]{$\mu$} & $P(T_\mu=z)$ & $0.16$ & $0.49$ & $0.07$ & $0.27
$ & $D_{S}^{\mu }=1.12$ \\ \hline
\end{tabular}%
\label{T:ishi}%
\end{table}%

We consider now the case in which the analyst posits a prior $P_{\mu }$ over
the component measures in $\mathcal{M}$.  \textcolor{black}{The following holds.}

\begin{proposition}
\label{T:genVarianceDec}Given $(\mathcal{M},\mathcal{F}(\mathcal{M}),P_{\mu })$, $g\in
\bigcup_{\mu^m \in \mathcal{M}}\mathcal{L}^{2}(\mathcal{X},\mathcal{B}(\mathcal{X}%
),\mu^m )$, let $\mathbb{V}[G]$ be the overall variance of $G$ due to
uncertainty in $\mathbf{X}$ and $P_\mu $. We have 
\begin{equation}
\mathbb{V}[G]=\sum \limits_{z\in 2^{Z},z\neq \emptyset }B_{z}+\mathbb{V}%
_{P_{\mu }}[\mathbb{E}_{\mu ^{m}}[G]],  \label{e_fANOVA_mixtures}
\end{equation}
where 
\begin{equation}
B_{z}=\mathbb{E}_{P_{\mu }}[V_{z}^{\mu ^{m}}]=\sum
\limits_{m=1}^{m}p_{m}V_{z}^{\mu ^{m}}  \label{e:VPz:pm}
\end{equation}
\textcolor{black}{is the weighted average of variance-based sensitivity indices of order z.} 
\end{proposition}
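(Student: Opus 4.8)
The plan is to interpret $\mathbb{V}[G]$ within the two-stage probability model underlying $(\mathcal{M},\mathcal{F}(\mathcal{M}),P_{\mu})$ --- first draw $\mu^m\sim P_{\mu}$, then draw $\mathbf{X}\sim\mu^m$ --- and to apply the law of total variance conditioning on the event $\{\mu=\mu^m\}$. Since $G=g(\mathbf{X})$ depends only on $\mathbf{X}$, whose joint-model marginal is exactly the mixture $\mu_{\mathbf{X}}$ of \eqref{e:mu:X:x}, the overall variance coincides with $\mathbb{V}^{\mu_{\mathbf{X}}}[G]$, and the conditional-variance decomposition reads
\begin{equation}
\mathbb{V}[G]=\mathbb{E}_{P_{\mu}}\big[\mathbb{V}^{\mu^m}[G]\big]+\mathbb{V}_{P_{\mu}}\big[\mathbb{E}_{\mu^m}[G]\big].
\end{equation}
The second summand is already the final term of \eqref{e_fANOVA_mixtures}, so the work reduces to rewriting the first (within-measure) summand.

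For that step I would invoke Proposition \ref{T:dec:Variance}, which gives, for each fixed component measure, $\mathbb{V}^{\mu^m}[G]=\sum_{z\neq\emptyset}V_z^{\mu^m}$. Taking the $P_{\mu}$-expectation, which in the discrete notation is the finite weighted sum $\sum_m p_m(\cdot)$, and interchanging the two finite sums yields
\begin{equation}
\mathbb{E}_{P_{\mu}}\big[\mathbb{V}^{\mu^m}[G]\big]=\sum_{m=1}^{Q}p_m\sum_{z\neq\emptyset}V_z^{\mu^m}=\sum_{z\neq\emptyset}\sum_{m=1}^{Q}p_m V_z^{\mu^m}=\sum_{z\neq\emptyset}B_z,
\end{equation}
which is precisely the definition \eqref{e:VPz:pm} of $B_z$; substituting back completes the identity. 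To make the conditioning rigorous I would verify the tower property $\mathbb{E}[G]=\mathbb{E}_{P_{\mu}}[\mathbb{E}_{\mu^m}[G]]$ (this is exactly Proposition \ref{T:EmuXEPmu}) and compute $\mathbb{E}[G^2]=\sum_m p_m\,\mathbb{E}_{\mu^m}[G^2]$, so that $\mathbb{V}[G]=\sum_m p_m\mathbb{V}^{\mu^m}[G]+\big(\sum_m p_m(\mathbb{E}_{\mu^m}[G])^2-(\sum_m p_m\mathbb{E}_{\mu^m}[G])^2\big)$, the bracketed difference being $\mathbb{V}_{P_{\mu}}[\mathbb{E}_{\mu^m}[G]]$ by definition. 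This reaches the same conclusion elementarily, without appealing to a black-box conditional-variance formula.

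The principal point to get right --- and the main obstacle --- is that one must \emph{not} attempt to derive \eqref{e_fANOVA_mixtures} by squaring the mixture functional ANOVA expansion $\sum_z g_z^{\mu_{\mathbf{X}}}$ and discarding cross terms. As the orthogonality discussion above shows, the mixture effect functions are not orthogonal with respect to $\mu_{\mathbf{X}}$, so $\int_{\mathcal{X}_z}[g_z^{\mu_{\mathbf{X}}}]^2\,d\mu_{\mathbf{X}}$ does not equal $B_z$ and the naive Pythagorean argument fails. Routing the proof through the law of total variance and the \emph{per-measure} orthogonality guaranteed by Proposition \ref{T:dec:Variance} is exactly what sidesteps this difficulty; it also clarifies why $B_z$ must be defined as the average $\mathbb{E}_{P_{\mu}}[V_z^{\mu^m}]$ rather than as a variance contribution attached to a single mixture effect. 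Finiteness of every term is ensured by the hypothesis that $g$ lies in the relevant $\mathcal{L}^2$ spaces, which makes all second moments $\mathbb{E}_{\mu^m}[G^2]$ finite and legitimizes the finite interchanges of summation.
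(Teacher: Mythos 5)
Your proof is correct and follows essentially the same route as the paper's: apply the law of total variance conditioning on the drawn measure $\mu^m$, decompose each within-measure variance via Proposition \ref{T:dec:Variance}, and interchange the finite sums by linearity to recognize $\sum_{z\neq\emptyset}B_z$. The extra material you add --- the elementary verification of the total-variance identity via $\mathbb{E}[G^2]=\sum_m p_m\mathbb{E}_{\mu^m}[G^2]$ and the warning that squaring the (non-orthogonal) mixture expansion would fail --- is sound and clarifying, but does not change the underlying argument.
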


Proposition \ref{T:genVarianceDec} shows that, under a mixture of
distributions, the variance of $G$ is decomposed into two summands. The
first term is the weighted average of the variance
decompositions of $G$ under the component measures in $\mathcal{M}$ 
---
structural term, henceforth. The second summand, $\mathbb{V}_{P_{\mu }}[%
\mathbb{E}_{\mu ^{m}}[G]]$, is the residual portion of the model output
variance associated with the variation of the expected value of $G$ across
the component measures \textcolor{black}{variability-of-the-mean term, henceforth}. 
\textcolor{black}{Thus, the mixture of ANOVA decompositions would explain the model output variance completely only if there is no variability in the expected value of the model output across the measures in $\mathcal{M}$.}
\begin{example}[Example \protect\ref{E_non_unique} continued]
\textcolor{black}{For the Ishigami function with the distributions assigned in Example \ref{E_non_unique}, we register $\mathbb{V}^{\mu }[G]=11.44$, $\mathbb{V}_{P_{\mu }}[%
\mathbb{E}_{\mu ^{m}}[G]]=1.03$, and $\sum \nolimits_{z\in 2^{Z},z\neq
\emptyset }B_{z}=10.41$. Thus, the structural term explains about 
$90\%$ of the model output variance. The variability-of-the-mean term accounts for the residual $10\%$ variation.}
\end{example}

\subsection{Dimension distribution}

The results of the previous section show that, posited $\mathcal{M}$, Owen's dimension distribution becomes conditional on $\mu
^{m}$. We can write $\dfrac{V_{z}^{\mu ^{m}}}{V^{\mu _{q}}[G]}=\Pr (T_{\mu^m}=z)$, 
$m\in \mathcal{M}$. Then, the number of dimension distributions equals
the cardinality of $\mathcal{M}$. In the without-prior path, the analyst might then inspect the variability of
the mean effective dimension as the measures vary in $\mathcal{M}$ to obtain an indication about how interactions vary depending on the assigned distribution. 

In the presence of a prior, the dimension distribution regains
uniqueness. In fact, by the total probability theorem, we have: 
\begin{equation}
\Pr(T_\mu=z)=\sum_{m=1}^{Q}p_{m}\Pr (T_{\mu ^{m}}=z)=\sum_{m=1}^{Q}p_{m}%
\dfrac{V_{z}^{\mu _{m}}}{V^{\mu _{m}}[G]}  \label{e:Pr:Z:generalized}
\end{equation}%
$\Pr (T_\mu=z)$ is now the mixture of the dimension distributions obtained under
each measure $\mu ^{m}$. We then have the following result.
\begin{proposition}
\label{C:D:DRobust}Under the assumptions of Proposition \ref%
{T:genVarianceDec}, the mean effective dimensions in the superimposition and
truncation sense are unique and equal to 
\begin{equation}
D_{S}=\mathbb{E}_{P_{\mu }}[D_{S}^{\mu }],\qquad  
\text{and}\qquad 
D_{T}=\mathbb{E}_{P_{\mu }}[D_{T}^{\mu }].  \label{e:D:ST:robust}
\end{equation}
\end{proposition}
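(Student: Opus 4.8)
The plan is to build directly on the mixture representation of Owen's mass function in \eqref{e:Pr:Z:generalized}, which already establishes that once the prior $P_{\mu}$ is posited, the probabilities $\Pr(T_\mu=z)$ are uniquely determined as the $P_{\mu}$-weighted average of the conditional mass functions $\Pr(T_{\mu^m}=z)$. Uniqueness of $D_S$ and $D_T$ is then immediate: both are deterministic functionals of a single, uniquely specified mass function, so no residual multiplicity survives the assignment of the prior. This contrasts with the without-prior path, where the cardinality of $\mathcal{M}$ indexes a family of distinct dimension distributions.

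For the superimposition sense, I would substitute \eqref{e:Pr:Z:generalized} into the defining sum $D_S=\sum_{|z|>0}|z|\Pr(T_\mu=z)$ and then interchange the two summations, over $z$ and over $m$. Since the weight $|z|$ does not depend on $m$, it can be carried inside, yielding $\sum_{m=1}^{Q}p_m\sum_{|z|>0}|z|\Pr(T_{\mu^m}=z)$; by the definition of $D_S^{\mu^m}$ the inner sum equals $D_S^{\mu^m}$, so that $D_S=\sum_{m=1}^{Q}p_m D_S^{\mu^m}=\mathbb{E}_{P_{\mu}}[D_S^{\mu}]$. The truncation case is handled identically, replacing $|z|$ by $\max\{j:j\in z\}$, which is again independent of $m$ and so commutes with the $P_{\mu}$-averaging in the same way, giving $D_T=\mathbb{E}_{P_{\mu}}[D_T^{\mu}]$.

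The only point requiring care is the legitimacy of interchanging the summations, or, when $\mathcal{M}$ is uncountable, of exchanging the sum over $z$ with the integral over $\mathcal{M}$. Both $|z|$ and $\max\{j:j\in z\}$ are bounded above by $n$, so every partial sum is dominated by $n$ and all terms are nonnegative; Tonelli's theorem therefore applies and the rearrangement is unconditional. This same boundedness guarantees that $D_S^{\mu}$ and $D_T^{\mu}$ are $P_{\mu}$-integrable, so the expectations $\mathbb{E}_{P_{\mu}}[D_S^{\mu}]$ and $\mathbb{E}_{P_{\mu}}[D_T^{\mu}]$ in \eqref{e:D:ST:robust} are well defined. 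I do not expect a genuine obstacle: the substantive content is entirely carried by \eqref{e:Pr:Z:generalized}, and the remainder is the bookkeeping of the expectation operator commuting with the finite sums that define the two effective dimensions.
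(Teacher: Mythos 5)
Your proposal is correct and follows essentially the same route as the paper's own proof: both substitute the total-probability (mixture) representation \eqref{e:Pr:Z:generalized} of $\Pr(T_\mu=z)$ into the defining sums for $D_S$ and $D_T$, interchange the sum over $z$ with the $P_\mu$-expectation, and identify the inner sums as $D_S^{\mu}$ and $D_T^{\mu}$. Your added Tonelli justification is harmless but unnecessary, since the sum over $z$ has only finitely many ($2^n-1$) terms, so linearity of expectation already suffices.
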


Equation \eqref{e:D:ST:robust}  suggests that the
mean effective dimensions in the superimposition and truncation
senses are now the mixtures of the conditional dimensions obtained under each
measure $\mu $.
\begin{example}[Example \protect\ref{E_non_unique_cont} continued]
Referring again to Table \ref{T:ishi}, \textcolor{black}{rows nine and ten display the
	unconditional dimension distribution $P(T_\mu=z)$, for the Ishigami function, and the corresponding unconditional mean effective dimension, which equals $D_S^{\mu}=1.12$. Rows four, six and eight report the
dimension distributions under each measure: They equal $D^{\mu^1}=1.24$, $D^{\mu^2}=1.06$ and $D^{\mu^3}=1.05$ under $\mu^1$, $\mu^2$ and $\mu^3$, respectively. These results indicate that there is variability in the intensity of interaction effects, which are largest under the first measure, and are minimal under the third measure. Therefore, an analyst must place attention as to what is the measure under which the sensitivity analysis is performed when inferring insights about the strength of interaction effects.}  \newline
\end{example}

\subsection{Implications for Inference: Robust Sensitivity Analysis Settings}\label{S:Settings}
\textcolor{black}{As recommended by best practice, inference using global sensitivity analysis needs to be framed within the so-called sensitivity analysis settings \cite{Saltelli2008}. These settings have been formulated under the unique distribution assumption. We then consider the impact in the formulation of sensitivity
settings consequent to the relaxation of such assumption. To fix ideas, consider the factor prioritization setting as defined in \cite[p. 705] {saltelli_jasa_2002} \emph{We are asked to bet on the factor that,
if determined (i.e., fixed to its true value), would lead to the
greatest reduction in the variance of $G$}.
Following \cite{saltelli_jasa_2002}, the appropriate sensitivity measures
for factor prioritization are the first order sensitivity indices. Here, if
the analyst posits $\mathcal{M}$ without specifying a prior, we have a multiplicity of
sensitivity indices. If robust insights are sought, then we are making
inference in the following extended setting: \emph{We are asked to bet on the
model input that, if fixed to its true value, would lead to the greatest
expected reduction in the variance of $G$ under all measures $\mu ^{m}$ 
in $\mathcal{M}$}. The operationalization is straightforward. Let us write} 
\begin{equation}
\begin{array}{ccc}
\overline{S}_{i}=\sup_{\mu ^{m}\in \mathcal{M}}S_{i}^{\mu ^{m}} & \text{ and } & 
\underline{S}_{i}=\inf_{\mu ^{m}\in \mathcal{M}}S_{i}^{\mu ^{m}}.%
\end{array}%
\label{e:SsupSinf}
\end{equation}%
The indices $\overline{S}_{i}$ and $\underline{S}_{i}$ are the extrema of the first order variance-based sensitivity indices across
the distributions in $\mathcal{M}$. Then, we need to look for the model inputs that
satisfy: 
\begin{equation}
\underline{S}_{i}>\overline{S}_{j}\text{, for all }j\neq i,\text{ }%
j=1,2,\dots ,n.  \label{e:B:robust}
\end{equation}%
That is, in the robust factor prioritization setting, a model input is the
most important, if the inferior of the values of its first order
variance-based sensitivity indices is greater than the superior of the first
order sensitivity indices of any other model input. If one or more model
inputs satisfy the search, then they are robustly the most important model
inputs based on variance reduction. The search can be repeated for the
second most important model input, etc. If the search is satisfied for all
ranks, we say that the entire ranking is robust.

\textcolor{black}{If the analyst
assigns a prior, she(he) might be considering to use the average of the sensitivity indices over the measures in $\mathcal{M}$. Such average equals $B_i$, i.e., the mixture of the first order sensitivity indices. These indices, because they are an average, have the advantage of
synthesizing the variance-based indices $V_i^{\mu^m}$. However, they relate only to the contribution of a model
input to the structural term in \eqref{e_fANOVA_mixtures}. If the mean
variability term is preponderant, inference based on the sole structural
term might not be exhaustive}.

\begin{example}
	\label{Ex:E_non_unique_cont_1} 
	\textcolor{black}{For the Ishigami function with the
		distributions assigned in Example \ref{E_non_unique}, $X_{2}$ and $X_3$ are, respectively, the most important and least important model inputs in a robust sense. In fact, the values in Table \ref{T:ishi} show that the conditions in  \eqref{e:B:robust} is satisfied. Consequently, $X_2$ and $X_3$ are also the most and least important model inputs if the average sensitivity indices are taken as sensitivity measures. This coincidence could be expected in this case, as the structural variability term accounts for $90\%$ of
		the model output variability.}
\end{example}

\textcolor{black}{Concerning trend identification, Proposition~\ref{T:rob:mon} suggests the following. If the model is monotonic, then we obtain a monotonic trend under any measure $ \mu^m \in \mathcal{M} $ for the first order effects. Then, consider that $ \mu^m $ is the measure in use. If we register a non-monotonic trend, the model is not monotonic and the indication holds for any other measure   $ \mu^{m^\prime}\in \mathcal{M} $. Conversely, if under $ \mu^m $ we register a monotonic trend of the first order effects, we cannot conclude that the model is monotonic.} 

\textcolor{black}{Concerning interaction quantification, we can define the
inferior and superior mean effective dimensions in the superimposition and
truncation senses, respectively as 
\begin{equation}
\begin{array}{ccc}
\underline{D}_{S}=\inf_{\mu }\dfrac{\sum\nolimits_{|z|>0}V_{z}^{\mu }\cdot
|z|}{\sum\nolimits_{|z|>0}V_{z}^{\mu }} & \text{ and } & \overline{D}%
_{S}=\sup_{\mu }\dfrac{\sum\nolimits_{|z|>0}V_{z}^{\mu }\cdot |z|}{%
\sum\nolimits_{|z|>0}V_{z}^{\mu }}%
\end{array}
\label{e:DS:DTinfsup}
\end{equation}%
and 
\begin{equation}
\begin{array}{ccc}
\underline{D}_{T}=\inf_{\mu }\dfrac{\sum\nolimits_{|z|>0}V_{z}^{\mu }\max
\{j:j\in z\}}{\sum\nolimits_{|z|>0}V_{z}^{\mu }} & \text{ and } & \overline{D%
}_{T}=\sup_{\mu }\dfrac{\sum\nolimits_{|z|>0}V_{z}^{\mu }\max \{j:j\in z\}}{%
\sum\nolimits_{|z|>0}V_{z}^{\mu }}.%
\end{array}
\label{e:DT:infsup}
\end{equation}
If a prior is set, we can consider the unconditional dimension distribution and the unconditional mean effective dimensions in eq.\ \eqref{e:D:ST:robust} to obtain indications on the relevance of interactions.
}

\section{Discussion}\label{S:Discussion}
\subsection{Interpretation and Methodological Aspects concerning Aggregation} 
\textcolor{black}{In this section, we discuss methodological aspects concerning the use of multiple distributions, the theoretical rationale that supports our two paths, and the aggregation of experts opinion.}  
\textcolor{black}{Concerning the circumstances that motivate the use of multiple distributions, a first case is the situation in which available data do not uniquely identify a best fit from a family of distributions --- see \cite{Chic01} among others. Closely related is the case in which a best fitting family is identified, but uncertainty remains about the values of the parameters. A first example is the application of Hu, Cao and Hong \cite{HuCaoHong12} (that we use as  starting point in our case study), who assign a multivariate normal distribution with uncertain variance to the model inputs.  A second example is the
situation in which the analyst has elicited model input
distributions from more than one expert, and the experts have provided discordant opinions. A third example is illustrated in \cite{MillDietHeal12}, and is the case in which alternative scientific studies assign different distributions to a given model input. A fourth case is
discussed in the works of \cite{Beckman1987}, \cite{Badea2008} and most recently \cite{PaleConf16}, where a
robustness question is asked directly by the analyst, who wishes to explore the
stability of sensitivity analysis results for perturbations in the model
input distributions.}

\textcolor{black}{Assigning a prior is necessary for two-stage Monte
Carlo sampling \cite{Chic01}.} \textcolor{black}{The assignment of a prior is a delicate task and has been thoroughly investigated in the literature --- see the monograph \cite{BernSmit94} for a comprehensive review. \cite{ Chic01} provides an accurate methodological summary and discusses advantages and disadvantages of four methods namely, the use of a uniform prior, the use of a non-informative prior \cite{Berger2009905}, of a data-driven prior \cite{Berger1996109} and of the moment-matching method \cite{Berger1996109}.}

\textcolor{black}{Concerning the theoretical interpretation of using or not a prior, we note a recent decision theory result in \cite{CerrMacchIPNAS}. The starting point is Wald's observation that the analyst knows only \textit{that the probability
distribution of X, that is, the probability measure function $\mu $ is in the space }$\mathcal{M}$\textit{\ }\cite[p. 279]{Wald47}.
Wald's approach leads to the minimax functional as decision
criterion. The minimax philosophy corresponds to the robust settings in Section\ \ref{S:Settings}. However, \cite[p. 975]{Cerretal13Robust} shows that, by enriching Savage's axioms with
Wald's datum, i.e. positing $\mathcal{M}$, one obtains as a decision criterion a two-stage utility
functional whose form is identical to the subjective expected
utility criterion of a Bayesian decision maker who assigns a prior $P_{\mu }$ over the
possible distributions. Thus, an expected utility decision maker who has posited $\mathcal{M}$ is, indeed, a Bayesian decision maker who averages uncertainty in distribution using the prior $P_{\mu }$. This leads directly to the mixture we have discussed.}\newline
\textcolor{black}{However, some words of caution are needed about the meaning of a mixture distribution, when the distributions come from experts.} \textcolor{black}{In fact, aggregating experts' opinion is a delicate task. Several aggregation methods are available, and their applicability depends also on the assumptions at the basis of the elicitation procedure. Because of the limited space in this work, we refer to the monographs \cite{Cook91,OHagan2006}, to the review articles \cite{Ouchi2004,PaulGartOHag05}, and to the works \cite{Aspi10,Grang14PNAS,Cooke201512} that offer analyses of critical aspects. \\
From an expert aggregation perspective, the mixture in \eqref{e:mu:X:x} can be seen as a linear opinion pool. A popular alternative is the logarithmic opinion pool \cite{PaulGartOHag05}. In this case the distribution is found by aggregating opinions through
$\mu _{\mathbf{X}}^{logpool}(\mathbf{x})=k\prod\limits_{m=1}^{Q}{{{\mu }^{m}_\mathbf{X}}{{(\mathbf{x})}^{{{w}_{m}}}}}$
where $k_i$ is a normalizing constant and the weights $w_m$ are positive and sum to unity. Now, as mentioned, in this work we consider a set $\mathcal{M}$ of product measures. We can then write $
\mu _{\mathbf{X}}^{logpool}(\mathbf{x})=k\underset{i=1}{\overset{n}{\mathop \prod }}\,\prod\limits_{m=1}^{Q}{\mu _{i}^{m}{{(x_i)}^{{{w}_{m}}}}.}$
Then, if we let $\mu _{i}^{logpool}({{x}_{i}})={{k}_{i}}\prod\limits_{m=1}^{Q}{\mu _{i}^{m}{{({{x}_{i}})}^{{{w}_{m}}}}},$ we obtain
\begin{equation}
\mu _{\mathbf{X}}^{logpool}(\mathbf{x})=\dfrac{k}{\prod_i^n{k}_{i}}\underset{i=1}{\overset{n}{\mathop \prod }}\,\mu _{i}^{logpool}({{x}_{i}}).
\label{e:logpoolmarginals}
\end{equation}
Thus, if an analyst aggregates the distributions in $\mathcal{M}$ using a logarithmic pool, the multiple distribution case is absorbed back into a unique distribution case. The unique distribution is now a product of suitably defined marginals and the properties of the classical ANOVA expansion remain unaltered in this case. That is, uncertainty in distribution impacts the model inputs marginal distributions, but not the classical functional ANOVA expansion. With a similar procedure one can analyze the joint distribution resulting from other aggregation methods and can assess the impact on the functional ANOVA expansion}.

\subsection{\textcolor{black}{Relationships with the Generalized Functional ANOVA Expansion}}\label{S:Generalized} 
\textcolor{black}{We take a step back and consider the unique distribution assumption for a moment. The distribution is $\mu$ as in Section \ref{S:Literature}. As we have seen, independence plays an important role in the classical functional ANOVA expansion in \eqref{e_FANOVA}. The works \cite{Hooker2007,ChasGamb,Li2012,Rahm14}  show that we can still recover an expansion of the form
of \eqref{e_FANOVA} without imposing any assumption on $\mu_{\mathbf{X}}(\mathbf{x})$. In particular, Rahman considers the following weak annihilating conditions \cite{Rahm14}
\begin{equation}
\begin{array}{c c c}
\int_{\mathcal{X}_{z}}g_{z}(\mathbf{x}_{z})f_{z}(\mathbf{x}_{z})dx_i=0 & \mbox{} & \mbox{for } i\in z\neq \emptyset,
\end{array}
\end{equation}
where $ f_z(\mathbf{x}_z) $ is the density of $ \mathbf{x}_z $, and shows that they lead to the generalized functional ANOVA expansion 
\begin{equation}\label{e:G:FANOVA}
g(\mathbf{x})=\sum_{z\in 2^{Z}}g_{z;R}^{\mu }(\mathbf{x}_{z}).
\end{equation}
Here the subscript $R$ denotes the fact that we are dealing with a generalized component function. Also, \cite{Rahm14} shows that the generalized component functions remain hierarchically orthogonal.\footnote{Hierarchical orthogonality is the condition 
$\mathbb{E}[g^{\mu}_{u;R}(X_u)g^{\mu}_{v;R}(X_v)]=0$
whenever $u\subset v$, $u\neq v$.}} The determination of the generalized component functions is now obtained through a system of coupled equations. To illustrate, for a three-variate function $g(\mathbf{x})=g(x_1,x_2,x_3)$, the equation of the main effect function $g_{1;R}(x_1)$ is \cite[p. 678]{Rahm14}
\begin{equation}
\begin{split}
g_{1;R}(x_1) & =\int_{\mathcal{X}_{2,3}}g(\mathbf{x})d%
\mu_{X_2,X_3}(x_2,x_3)-g_{0;R} -\int_{\mathcal{X}_{2}}g_{1,2;R}(x_1,x_2)d%
\mu(x_2)+ \\
- &\int_{\mathcal{X}_{3}}g_{1,3;R}(x_1,x_3)d\mu(x_3)-\int_{\mathcal{X}_{2,3}}g_{1,2,3;R}(x_1,x_2,x_3)d\mu_{X_2,X_3}(x_2,x_3).  \label{e:g1:G}
\end{split}
\end{equation}
The expression in \eqref{e:g1:G} involves all the
effect functions in the decomposition that contain model input $X_1$. Thus, as opposed to the independence case, the functional ANOVA terms cannot be determined recursively. \textcolor{black}{ However, a series of recent works \cite{Li2012,LiRabi17,Rahm14} present methodologies for obtaining the generalized ANOVA terms bypassing the coupling problem. In particular, the terms of the
generalized functional ANOVA expansion can be obtained following the procedure in \cite{Rahm14} by selecting a basis made of orthonormal polynomials with respect to the mixture measure $\mu _{\mathbf{X}}(\mathbf{x})$.}\\ 
\textcolor{black}{These works also extend variance-based sensitivity indices for dependent inputs. In particular, \cite{Rahm14} shows that it is possible to decompose the variance of $G$ as
\begin{equation}
\mathbb{V}[G]=\sum_{u\in2^Z\setminus  \emptyset } \mathbb{E}_{\mu
}[g_{u;R}(\mathbf{X}_{u})^{2}]+\sum_{\substack{ u,z\in2^Z\setminus  \emptyset \\ u\not\subseteq z\not\subseteq u}}\mathbb{E}_{\mu
}[g_{u;R}(\mathbf{X}_{u})g_{z;R}(\mathbf{X}_{z})], \label{e:VG:Rahman}
\end{equation}%
and correspondingly define pairs of variance-based sensitivity indices 
\begin{equation}
S_{u}^{V}={\mathbb{V}%
[G]}^{-1}{\mathbb{E}_{\mu }[g_{u;R}(\mathbf{X}_{u})^{2}]}
\end{equation}%
and 
\begin{equation}
S_{u}^{C}={\mathbb{V}[G]}^{-1}{\sum_{\substack{u,v\in2^Z\setminus  \emptyset  \\ %
u\not\subseteq v\not\subseteq u}}\mathbb{E}_{\mu }[g_{u;R}(\mathbf{X}%
_{u})g_{v;R}(\mathbf{X}_{v})]},
\end{equation}%
where the first index $S_{u}^{V}$ refers to a variance contribution, the
second to a covariance contribution $S_{u}^{C}$ generated by the presence of correlations.}
\textcolor{black}{These results can be related to our work. In the with-prior path, in fact, the analyst posits a set of measures $\mathcal{M}$ and assigns a prior $P_{\mu }$ obtaining a joint model input distribution $\mu _{\mathbf{X}}(\mathbf{x})$, which, as we have seen, is not a product measure.  We then have the following identities:%
\begin{equation}
\sum_{z\in2^Z}g_{z}^{P_{\mu }}(\mathbf{x}_{z})=\sum_{z\in2^Z}g_{u;R}(\mathbf{x}_{u}),
\label{e:fANOVA:equalitry}
\end{equation}%
and 
	\begin{equation}\label{e:Vdecompostions}
	\sum\limits_{u\in2^Z\setminus\emptyset }B_{u}+\mathbb{V}_{P_{\mu
	}}[\mathbb{E}_{\mu ^{m}}[G]]=\sum_{u\in2^Z\setminus\emptyset }%
	\mathbb{E}_{\mu }[g_{u;R}(\mathbf{X}_{u})^{2}]+\sum_{\substack{ u,z\in2^Z\setminus\emptyset  \\ u\not\subseteq z\not\subseteq u}}\mathbb{E}_{\mu
	}[g_{u;R}(\mathbf{X}_{u})g_{z;R}(\mathbf{X}_{z})]
	\end{equation}%
Equation \eqref{e:fANOVA:equalitry} relates the mixture functional ANOVA expansion in \eqref{T:genFANOVA} (left hand side) and the
generalized decomposition with respect to the mixture distribution $\mu _{\mathbf{X}}(\mathbf{x})$. Equation \eqref{e:Vdecompostions} relates the decomposition of the model output variance across the measures in $ \mathcal{M} $ (left hand side) to the decomposition over the joint mixture distribution $ \mu_{_{\mathbf{X}}}(\mathbf{x}) $ (right hand side).}

Regarding the simultaneous relaxation of the independence and uniqueness assumptions, we hint here at some possible results, which, however, need to be formally addressed in future research. \textcolor{black}{First, if we relax the independence assumption, the decision maker posits a set  $\mathcal{M}$ consisting of joint distributions, in which $\mu_X^m$ is not necessarily a product measure.  The generalized functional ANOVA expansion in \eqref{e:G:FANOVA} becomes the relevant expansion. In the without-prior path, the analyst is then dealing with a multiplicity of generalized functional ANOVA decompositions. We argue that the definition of functional ANOVA core and
Proposition \ref{P:PSIunionPHI} would still hold, under the condition that two joint measures lead to the same generalized functional ANOVA expansion, and not to the same classical ANOVA expansion. Concerning properties such as monotonicity and ultramodularity, we would need to study these properties how the generalized component functions behave with respect to these properties. For variance decomposition in the without-prior path, we would have as many pairs of indices $ S_{u}^{V} $ and $ S_{u}^{C} $ as many are the measures in $ \mathcal{M} $. In the with-prior path, we would regain uniqueness of the expansion and of the variance decomposition. However, the disclaimer holds that the investigation of this subject requires more space that can be devoted here and is, we hope, an interesting question of future research.}

\section{Numerical Implications and An Application\label{S:Numerics}}
\subsection{Estimation: Investigation of the Computational Cost}
Uncertainty quantification in the presence of a prior $P_{\mu }$ is,
conceptually, carried out following a two-stage sampling strategy \cite%
{Chic01}. First, a distribution $\mu ^{m}$ is drawn from $\mathcal{M}$ according to $%
P_{\mu }$ and, subsequently, a sample of values of $X$ is drawn from $\mu
^{m}$. Then, conditional on assuming $\mu ^{m}$ as a probability measure for
the model inputs, the cost of estimating a global sensitivity measure is the
same as under any unique measure. We recall that the brute force estimation
of all the terms of the functional ANOVA expansion requires a double loop of
model evaluations multiplied by the number of terms, leading to a cost $%
C^{\mu ^{m}}=N^{2}(2^{n}-1)$, where $N$ is the sample size and $2^{n}-1$ is the number of terms to be
estimated. Then, the overall computational cost ($C^{BF}$) becomes, in
principle $
C^{BF}=N^{P_{\mu }}C^{\mu ^{m}}=N^{P_{\mu }}N^{2}(2^{n}-1),$
where $N^{P_{\mu }}$ is the number of sampled distributions. This cost is clearly prohibitive for most computer experiments.

\textcolor{black}{We examine a few strategies to reduce computational burden in the remainder of this section.} 
\textcolor{black}{First, we can lower $C^{\mu ^{m}}$ profiting of methods developed in previous literature. To illustrate,} the design of \cite{Salt10TotalCPC}
lowers $C^{\mu ^{m}}$ to $N(n+1)$ to obtain all first and total indices \cite%
{StroOaki12JRSSC,PlisBorg13}. \textcolor{black}{This cost can be further reduced using a \emph{given data} design. As mentioned in Section \ref{S:Literature}, this estimates variance-based sensitivity measures directly on the sample of size $ N $ generated for uncertainty quantification by a single-loop Monte Carlo or quasi-Monte Carlo scheme, making the estimation cost independent of the number of model inputs}.

\textcolor{black}{A second strategy consists of lowering $N^{P_{\mu }}$}. \textcolor{black}{For instance the
re-weighting approach of \cite{Beckman1987} permits to use a single sample of $N$ model runs. The principle of the approach is similar to importance sampling, and we refer to \cite{Beckman1987} and 
\cite{Badea2008} for additional details. Thus, combining a re-weighting approach
with a given data approach has the potential of reducing the overall cost of the analysis
to $N$ model evaluations even in the presence of multiple distributions.}

\begin{example}
\label{Ex:E_non_unique_cont_1 copy(1)}To illustrate, we consider estimating
the first order sensitivity measures for the Ishigami function. Given a
sample of $(X,G)$ generated under $\mu ^{1}$ we associate a weight $w=\frac{%
f^{2}}{f^{1}}$ with each realization, where $f^{1}$ and $f^{2}$ are the
densities corresponding to $\mu ^{1}$ and $\mu ^{2}$. Mean, conditional
means and variance under $\mu ^{2}$ can then be computed as weighted local
averages or weighted sum of squares from the original sample. These weighted
versions can be used to estimate first-order variance contributions under $%
\mu ^{2}$ when a sample under $\mu ^{1}$ is available. The estimated first
order sensitivity indices at $N=10,000$ are $\widehat{S}_{1}^{\mu ^{1}}=0.33$%
, $\widehat{S}_{2}^{\mu ^{1}}=0.45$, $\widehat{S}_{3}^{\mu ^{1}}=0.00$.
Re-weighting this sample leads to the following estimates for the
variance-based sensitivity indices under $\mu ^{2}$: $\widehat{S}_{1}^{\mu
^{2}}=0.11$, $\widehat{S}_{2}^{\mu ^{2}}=0.83$, $\widehat{S}_{3}^{\mu
^{2}}=0.01$. The numerical values are close to those of Table~\ref{T:ishi}.
\end{example}

The above example refers to the Ishigami model, in which the running time is not an issue. In the case the running time is problematic, then an efficient way to reduce computational burden is represented by fitting the original model through a metamodel. The metamodel can then be used to carry out an analysis under alternative distributions. Here, some provisions need to be taken. For instance, the support assigned by the analyst may change with the measures in $\mathcal{M}$ --- see Example \ref{E_non_unique}. Then, we may wish to train the metamodel on the distribution with the largest support. To illustrate, suppose in Example \ref{E_non_unique} we use a sample coming from the third assignment where support is $[0,1]^3$ to fit the metamodel. If we use this metamodel to replace the original model and sample from the second distribution, in some instances we might be evaluating the emulator on values of the inputs falling outside the original training, with little control of the accuracy of the metamodel on these points if deviations from linearities are present. However, this is just a first aspect that appears in the presence of multiple distributions and a full investigation is outside the reach of the present work.

\subsection{Application: Sensitivity of the DICE Model with Multiple Distributions}
\textcolor{black}{A topical field characterized by scientific ambiguity is climate change \cite{MillDietHeal12}. Analysts, in fact, are frequently unable to assign a unique distribution to the inputs of integrated assessment models. Then, the question is whether we can still obtain robust insights from sensitivity analysis of integrated assessment models under uncertainty in distribution.
Our application is motivated by the results of the very recent and
influential investigation of \citet{NBERBosetti2015}, where
the developers of six of the most widely recognized integrated assessment
models perform a thorough and systematic uncertainty analysis
of the response of integrated assessment models (IAMs) in climate
change. IAMs are sophisticated computer codes that simulate complex
phenomena related to climate change evolution. \textit{One of the
key findings is that parametric uncertainty is more important than
uncertainty in model structure} \citep[p. 1]{NBERBosetti2015}. These
findings show, once again, the need of performing a rigorous sensitivity
analysis, especially if the goal is the identification of the key-drivers of uncertainty.}

\textcolor{black}{Because our purpose is illustrative, and also for granting reproducibility
of our results, we focus on William Nordhaus' \cite{Nord08}
Dynamic Integrated Climate-Economy (DICE) model and specifically on
the baseline (no controls) case in that model.\footnote{The code is the 2007 version of the model available at \url{www.econ.yale.edu/~nordhaus/homepage/DICE_delta_v8_YUP_book_short_noexclude.GMS}.} DICE is one of the
best known IAMs and has been applied and used as a benchmark
in several studies concerning uncertainty quantification in climate
change modelling. Aside from the original uncertainty analysis in
\citet{Nord08}, a variance-based sensitivity analyses of DICE is
performed by \citet{Butler2014} and \citet{Butler2014a}. More recently,
\citet{AndeBorg12} extend the analysis by estimating also the $\delta$-importance
measure. DICE has also been used as a test
case in robust optimization contexts, in studies such as \citet{MillDietHeal12,McInLempKell11,HuCaoHong12}. As Hu et al. underline \citet{HuCaoHong12}, the starting point for an uncertainty analysis of the DICE model is the investigation performed by Nordhaus himself in Chapter 7 of \cite{Nord08}.
We report the reference distributions in Table \ref{t:Nordh}.}
\begin{table}
\centering
\caption{\textcolor{black}{Distributional Assumptions in Nordhaus (2008) original Uncertainty Quantification; Table 7-1, p. 127, \cite{Nord08}.}}
\begin{tabular}{|c|c|c|c|}
\hline 
$X_i$ & Model Input Name & Mean & Std.\ Deviation \\ 
\hline 
$X_1$ & Growth in total factor productivity & 0.0092 & 0.004 \\ 
\hline 
$X_2$  & Initial sigma growth & 0.007 & 0.002  \\ 
\hline 
$X_3$  & Climate sensitivity & 3 & 1.11 \\ 
\hline 
$X_4$  & Damage function exponential factor & 0.0028& 0.0013 \\ 
\hline 
$X_5$  & Cost of backstop in 2005 & 1170 & 468 \\ 
\hline 
$X_6$  & POPASYM & 8600 & 1892 \\ 
\hline 
$X_7$  & $b_{12}$ in carbon cycle transition matrix & 0.189 & 0.017 \\ 
\hline 
$X_8$  & Cumulative fossil fuel extraction & 6000 & 1200 \\ 
\hline
\end{tabular}
\label{t:Nordh} 
\end{table}

\textcolor{black}{Nordhaus \cite{Nord08} (p. 126) remarks:  \emph{It should be emphasized that these distributions are indeed judgmental and have been estimated by the author. Other researchers would make, and other studies have made, different assessments of the values of these parameters}. Indeed, alternative distributions are used in subsequent studies. For instance, in \cite{McInKell08} and \cite{Hall2012} only four
uncertain model inputs are considered and the assigned distributions differ from the ones in the original study of \cite{Nord08}, and in \cite{HuCaoHong12} a second order distribution over the mean and variances of Table \ref{t:Nordh}.}
\textcolor{black}{In \cite[Section 4.3]{HuCaoHong12} the standard deviations of the model inputs are allowed a fifty percent decrease and increase a twenty percent increase. To illustrate a sensitivity analysis in this context, we discretize the variations in the standard deviations and let $\mathcal{M}=\{\mu^1_{\mathbf{X}}(\mathbf{x}),\mu^2_{\mathbf{X}}(\mathbf{x}),\dots,\mu^{19}_{\mathbf{X}}(\mathbf{x})$, where: 1) $\mu^1_{\mathbf{X}}(\mathbf{x})$ is Nordhaus's original distribution; 2) $\mu^2_{\mathbf{X}}(\mathbf{x}),\dots,\mu^9_{\mathbf{X}}(\mathbf{x})$ ($\mu^{10}_{\mathbf{X}}(\mathbf{x}),\dots,\mu^{17}_{\mathbf{X}}(\mathbf{x})$) are joint distributions with one of the model input variances shifted to its lower (upper) value, with the remaining fixed at the reference values of Table \ref{t:Nordh}; and 3) $\mu^{18}_{\mathbf{X}}(\mathbf{x})$ and $\mu^{19}_{\mathbf{X}}(\mathbf{x})$ are distributions with all model input variances at their lowest and highest values, respectively. For the with-prior path, we assign $P(\mu^1_{\mathbf{X}})=P(\mu^{18}_{\mathbf{X}})=P(\mu^{19}_{\mathbf{X}})=\frac{1}{5}$, and the remaining probabilities equal to $P(\mu^m_{\mathbf{X}})=\frac{1}{40}$ for $m=2,3,\dots,17$.}

To produce results while taking computation burden under control we proceed as follows. We consider the original distributions of \cite{Nord08} and generate a sample of size $N=10,000$. The calculations are performed in the General Algebraic Modeling System (GAMS), a platform for mathematical programming and optimization, in which the DICE model is implemented and evaluated. The $10,000$ evaluations take about 12 hours on a PC with 8GB RAM, dual core. Then, for the analysis of the remaining $18$ distributions, we train an emulator through Kriging to substitute the original model. The emulator fit registers an $R^2$ coefficient of $0.97$.
As a model output of interest we consider the change in atmospheric temperature in year 2100. Figure \ref{f:variancebased} displays the values of the first order sensitivity measures across the $19$ scenarios. b
\begin{figure}[hbtp]
\centering
\includegraphics[width=0.8\textwidth]{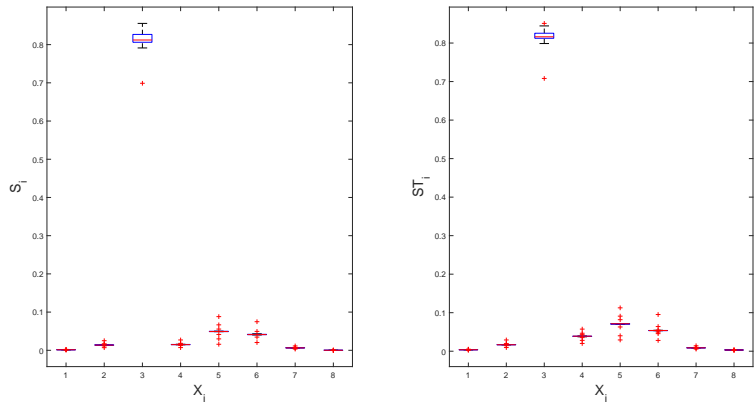}
\caption{Boxplots of the normalized first and total order variance based sensitivity measures across the 19 assigned distributions.}
\label{f:variancebased}
\end{figure}
Because we are in the without-prior path, we adopt the robust sensitivity setting of Section \ref{S:Settings}. The sensitivity measures in a robust factor prioritization setting are the first order sensitivity indices. We consider then the left graph in Figure \ref{f:variancebased}. We observe that $X_3$, climate sensitivity, is robustly the most important, because \eqref{e:B:robust} is satisfied for this model input. As for the runner up, model input $X_5$ is ranked second in $16$ out of the $19$ distribution assignments, and ranks $3^{rd}$ in two and $4^{th}$ in one. Model input $X_6$ ranks second in $3$ out of $19$ measures in $\mathcal{M}$ and ranks $3^{rd}$ otherwise. To recover robust rankings, we need to go to the least important model inputs, $X_1$, $X_7$ and $X_8$, which rank $7^{th}$, $6^{th}$ and $8^{th}$ under all measures in $\mathcal{M}$.

Concerning interaction quantification, at Nordhaus' original distribution assignment, the sum of first order indices is estimated at $0.93$, signalling a low impact of interactions. Over the additional eighteen measures, the estimate of the sum of the first order sensitivity measures ranges from a minimum of $0.92$ (last measure) to a maximum of $0.96$ (second last measure). Note that these two measures are the ones where the model input variances are at their maxima and their minima, respectively. Using a subroutine based on regression with harmonic cosine functions, we calculated the second order sensitivity indices. We register a sum of the first and second order indices close to unity, indicating that higher order interaction effects are negligible. We can then approximately compute the dimension distributions over the nineteen scenarios and the corresponding mean effective dimension in the sumperimposition sense. In the original Nordhaus
  assignment, the mean effective dimension is estimated at $1.09$. Over the additional $18$ scenarios, we register $\underline{D}_S=1.03$ and $\overline{D}_S=1.13$. Again, these values confirm the low impact of interactions.

\begin{figure}[hbtp]
\centering
\includegraphics[scale=.35]{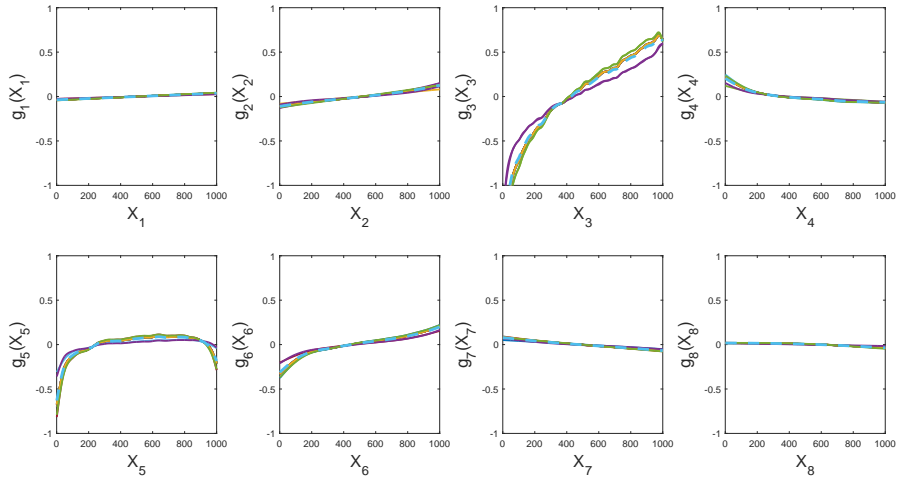}
\caption{Plots of the first order terms of the functional ANOVA expansion of the DICE model over the 19 distributions. The plots have been obtained using the SSANOVA.m code of \cite{Ratto2010}.}
\label{f:19DiceFirstOrder}
\end{figure}

\textcolor{black}{Regarding trend identification, the plots of the first order terms (Figure \ref{f:19DiceFirstOrder}) show that the model is not monotonic. In particular, the expected behavior of $G$ is decreasing in $X_4$ and $X_7$, while increasing in the remaining model inputs.  The graphs also confirm the high sensitivity of the model output on changes in $X_3$, and the low sensitivity on $X_1$, $X_7$ and $X_8$.
We observe that temperature change in $2100$ is increasing with respect to the climate sensitivity, the most important parameter. This result is in accordance with intuition, because the higher climate sensitivity the higher the expected increase in temperature.}

\textcolor{black}{In the with-prior path, one registers $\mathbb{V}[G]^{-1}\mathbb{V}_{P_{\mu}}[\mathbb{E}_{\mu^m}[G]]\approx 3.5\cdot 10^{-4}$, signaling that the fraction of $ \mathbb{V}[G] $ explained by the variation in the mean is about $1.8\%$. Because this ratio is small, we can consider the averages of the first order sensitivity indices, which is displayed in Figure \ref{f:19DiceFirstOrder}. They also indicate climate sensitivity as the most important parameter, followed by the cost of backstop in 2005. Regarding interaction quantification, the unconditional dimension distribution in \eqref{e:Pr:Z:generalized} is estimated at about $1.08$. This result signals a low impact of interactions, in agreement with the without-prior case. Regarding trend identification, the first order mixture effects are displayed in Figure \ref{f:19DiceFirstOrder} as dotted lines. The behavior of these effects is in agreement with the indications obtained in the without-prior case.}

\section{Conclusions}\label{S:COncl}

\textcolor{black}{This work has provided a first systematic study of the classical functional ANOVA expansion when the unique distribution assumption is relaxed. Such relaxation impacts properties of the expansion such as existence, uniqueness, orthogonality, ultramodularity and monotonicity. We have addressed these properties considering two main paths, depending on whether the analyst is willing to assign a prior. We have seen that, in the without-prior path, the analyst is dealing with a multiplicity of functional ANOVA expansion and, consequently, of variance-based sensitivity measures. In this context, we have introduced robust sensitivity settings. In the with-prior path, the analyst regains uniqueness and the mixture of functional ANOVA expansions equals the original mapping.}\\
\textcolor{black}{We have searched for conditions that allow the analyst to proceed as if the distribution was unique. We have discussed how insights concerning monotonicity are impacted by the assignment of a set $\mathcal{M}$ of plausible distributions. Moreover, if the analyst aggregates the distributions in $\mathcal{M}$ using a logarithmic opinion pool, then she(he) obtains a unique product distribution that allows her(him) to proceed as if the distribution was unique.
Our case study applies the findings to a well known climate model, DICE, under uncertainty in distribution, drawing from previous studies performed on the same model. Results show that when the temperature change in 2100 is the variable of interest, climate sensitivity is consistently the most important model input over the alternative distributions. Because this model input has been indicated by \cite{MillDietHeal12} among the traditional uncertainties in climate modelling, this result confirms the importance of using a robust approach when making inferences on temperature changes using the DICE model.}
\section*{Acknowledgments}

The authors wish to thank the Editor, the Associate Editor and the anonymous
reviewers for the very perceptive comments that have greatly helped us in
ameliorating the manuscript, clarifying is contributions and limitations. We
also wish to thank Steven Chick for the enlightening discussion during the
INFORMS 2012 conference in Phoenix. The authors wish also to thank Jeremy
Oakley for several constructive comments on this work. 

\section{Appendix A: Proofs}

\begin{proof}[Proof of Proposition \protect\ref{P:PSIunionPHI}]
Consider the relation $\mu^{\prime }\overset{C}{\sim{}}\mu^{\prime \prime }$
defined by whether $\mu^{\prime }$ and $\mu^{\prime \prime }$ belong to the
same core. This is an equivalence relation. 
Passing over to the associated equivalence classes yields a partition of $%
\Psi[g]$. \qed
\end{proof}

\begin{proof}[Calculations for Example \protect\ref{P:multilin}]
Under model input independence, $g_{0}^{\mu }=\sum\limits_{u\in
2^{Z}}\prod\limits_{i\in u}\mathbb{E}[t_{i}(X_{i})]$ and $g_{0}^{\mu
^{\prime }}=\sum\limits_{u\in 2^{Z}}\prod\limits_{i\in u}\mathbb{E}%
[t_{i}(X_{i}^{\prime })]$ which coincide when all $\mathbb{E}[t_{i}(X_{i})]$
are equal to $\mathbb{E}[t_{i}(X_{i}^{\prime })]$. We consider an induction
over $|z|$. Assuming $g_{\nu }^{\mu }=g_{\nu }^{\mu ^{\prime }}$ for all $%
\nu \subset z$, $\nu \neq z$, $\mathbb{E}[t_{i}(X_{i})]=\mathbb{E}%
[t_{i}(X_{i}^{\prime })]$ for all $i$ implies $g_{z}^{\mu }=g_{z}^{\mu
^{\prime }}$, because 
\begin{equation*}
\int_{\mathcal{X}_{\sim z}}g(\mathbf{x})d\mu (\mathbf{x}_{\sim
z})=\sum_{z}\int_{\mathcal{X}_{\sim z}}\prod_{i=1}^{n}t_{i}(x_{i})d\mu (%
\mathbf{x}_{\sim z})=\sum_{z}\prod_{i\in z}t_{i}(x_{i})\cdot \prod_{i\not\in
z}\mathbb{E}[t_{i}(X_{i})]
\end{equation*}%
is equal to $\int_{\mathcal{X}_{\sim z}}g(\mathbf{x})d\mu ^{\prime }(\mathbf{%
x}_{\sim z})$.
\end{proof}

\begin{proof}[Proof of Proposition \protect\ref{T:genFANOVA}]
By Proposition \ref{T_FANOVA}, for any measure $\mu \in \Psi \lbrack g]$,
one can write the function $g$ as $g(\mathbf{x})=\sum_{z\in2^Z}g_{z}^{\mu
}(\mathbf{x}_{z})$. Then, given $(M,\mathcal{F}(M),P_{\mu })$, let us take
the expectation of both sides: 
\begin{equation}
\mathbb{E}_{P_{\mu }}[g(\mathbf{x})]=\mathbb{E}_{P_{\mu }}\left[
\sum_{z\in 2^{Z}}g_{z}^{\mu }(\mathbf{x}_{z})\right] =\sum%
\limits_{m=1}^{Q}p_{m}\sum_{z\in 2^{Z}}g_{z}^{\mu ^{m}}(\mathbf{x}_{z}).
\end{equation}%
Because $g(\mathbf{x})$ is independent of $\mu $, we obtain $\mathbb{E}%
_{P_{\mu }}[g(\mathbf{x})]=g(\mathbf{x})$. Note that this equality holds
under any measure $P_{\mu }$. For the right 
hand side, by the linearity of the expectation operator, we have: 
\begin{equation}
\mathbb{E}_{P_{\mu }}\left[ \sum_{z\in 2^{Z}}g_{z}^{\mu^{m}}(\mathbf{x}_{z})%
\right] =\sum_{z\in 2^{Z}}\mathbb{E}_{P_{\mu }}[g_{z}^{\mu^{m}}(\mathbf{x}%
_{z})]
\end{equation}%
Then, by definition $\mathbb{E}_{P_{\mu }}[g_{z}^{\mu^{m}}(\mathbf{x}%
_{z})]=g_{z}^{P_{\mu }}(\mathbf{x}_{z})$.
\end{proof}

\begin{proof}[Proof of Proposition \protect\ref{T:EmuXEPmu}]
We start with the expected value of $G$. We have 
\begin{equation}
\mathbb{E}_{\mu_{\mathbf{X}}}[G]=\int gd\mu_{_{\mathbf{X}}}=\int
g\sum_{m=1}^{Q}p_{m}d\mu^{m}=\sum_{m=1}^{Q}p_{m}\int
gd\mu^{m}=\sum_{m=1}^{Q}p_{m}g_{\emptyset}^{\mu^{m}}
\end{equation}
The equality follows by the first equality in  \eqref{e:gz:P}. Concerning
the generic effect function, it suffices to prove the identify for conditional
expectations. We have 
\begin{multline*}
w_{z}^{\mu_{_{\mathbf{X}}}}(\mathbf{x}_{z})=\int_{\mathcal{X}}(g(\mathbf{x}%
_{z},\mathbf{x}_{\sim z}))d\mu_{\mathbf{X}}(\mathbf{x}_{\sim z})=\int g(%
\mathbf{x}_{z},\mathbf{x}_{\sim z}) \sum_{m=1}^{Q}p_{m}d\mu^{m}(\mathbf{x}%
_{\sim z})= \\
\sum_{m=1}^{Q}p_{m}\int g(\mathbf{x}_{z},\mathbf{x}_{\sim z})d\mu^{m}(%
\mathbf{x}_{\sim z})= \sum_{m=1}^{Q}p_{m}w_{z}^{\mu^{m}}(\mathbf{x}_{z})=%
\mathbb{E}_{P_{\mu}}[w_{z}^{\mu^{m}}(\mathbf{x}_{z})]=w_{z}^{P_{\mu}}(%
\mathbf{x}_{z})
\end{multline*}
\end{proof}

\begin{proof}[Proof of Proposition \protect\ref{T:rob:mon}]
Item 1. By Lemma \ref{L_mi_mon}, if $g$ is non-decreasing, then $w_{z}^{\mu
}(\mathbf{x}_{z})$ is non-decreasing for any measure $\mu $. %
Equation \eqref{e_mi1ik_non_orth} shows that a generalized non-orthogonalized effect function is
the linear combination of $w_{z}^{\mu }(\mathbf{x}_{z})$ with positive
weights. Therefore $\int w_{z}^{\mu }(\mathbf{x}_{z})dP(\mu )$ or $%
\sum_{m=1}^{Q}p_{m}w_{z}^{\mu^m}(\mathbf{x}_{z})$ is, then, non-decreasing in 
$\mathbf{x}_{z}$. $w_{z}^{P_{\mu }}(\mathbf{x}_{z})$ is, then,
non-decreasing. 
Item 2 follows from the 
same reasoning as for Item 1. 
\newline
Item 3 is proven as follows. By assumption, because $p_{m}\geq 0$, (or $%
dP(\mu )\geq 0$) for a generic $\mu^{m}$ it is $
p_{m}\Delta w_{z}^{\mu^{m}}\geq p_{m}\!\sum \limits_{v\subset z}\!\!\Delta
g_{v}^{\mu^{m}},$
\textcolor{black}{where $v\subset z$ denotes true subsets, the case $v=z$ is excluded.}
Then, the following is true: 
\begin{equation}
\sum_{m=1}^{Q}p_{m}\Delta w_{z}^{\mu^{m}}\geq \sum_{m=1}^{Q}\sum
\limits_{v\subset z}p_{m}\Delta g_{v}^{\mu^{m}}=\sum
\limits_{v\subset z}\sum_{m=1}^{Q}p_{m}\Delta g_{v}^{\mu^{m}}=\sum
\limits_{v\subset z}\Delta g_{v}^{P_{\mu }}
\end{equation}%
Noting that the left-hand side is $\Delta w_{z}^{P_{\mu }}$, we have $\Delta
w_{z}^{P_{\mu }}\geq \sum \limits_{v\subset z}\Delta g_{v}^{P_{\mu }}$.
\end{proof}
\begin{proof}[Proof of Proposition \protect\ref{T:genVarianceDec}]
The following holds by the law of total variance: 
\begin{equation}
\mathbb{V}[G]=\mathbb{E}_{P_{\mu }}[\mathbb{V}\{G|\mu ^{m}\}]+\mathbb{V}%
_{P_{\mu }}\{ \mathbb{E}[G|\mu ^{m}]\}
\end{equation}%
Under the generic input distribution $\mu ^{m}$ \eqref{e:V:mu:z} applies, so
that $\mathbb{V}_{\mu ^{m}}[G]=\sum \limits_{z\in 2^{Z},z\neq \emptyset
}V_{z}^{\mu ^{m}}$, from which we obtain 
\begin{equation}
\begin{array}{c}
\mathbb{V}[G]=\mathbb{E}_{P_{\mu }}[\sum \limits_{z\in 2^{Z},z\neq \emptyset
}V_{z}^{\mu ^{m}}]+\mathbb{V}_{P_{\mu }}\{ \mathbb{E}[G|\mu ^{m}]\}= 
\sum \limits_{z\in 2^{Z},z\neq \emptyset }B_{z}+\mathbb{V}_{P_{\mu }}\{%
\mathbb{E}[G|\mu ^{m}]\}%
\end{array}%
\end{equation}%
which holds by the linearity of the summation and expectation operators.
\end{proof}

\begin{proof}[Proof of Corollary \protect\ref{C:D:DRobust}]
We start with the first equality in \eqref{e:D:ST:robust}. By definition we have 
\begin{multline*}
D_{S}=\sum\nolimits_{|z|>0}|z|\Pr (T_\mu=z)=\sum\nolimits_{|z|>0}|z|\mathbb{E}%
_{P_{\mu }}[\Pr (T_\mu=z|\mu =\mu ^{\ast })]= \\
\mathbb{E}_{P_{\mu }}\left[ \sum\nolimits_{|z|>0}|z|\Pr (T_\mu=z|\mu =\mu ^{\ast
})\right] =\mathbb{E}_{P_{\mu }}[D_{S}^{\mu }].
\end{multline*}%
For the second equality in \eqref{e:D:ST:robust}, we proceed in a similar way, obtaining: 
\begin{equation*}
\begin{array}{c}
D_{T}=\sum\nolimits_{|z|>0}\max \{j:j\in z\}\Pr
(T_\mu=z)=\sum\nolimits_{|z|>0}\max \{j:j\in z\}\mathbb{E}_{P_{\mu }}[\Pr
(T_\mu=z|\mu =\mu ^{\ast })]= \\ 
\mathbb{E}_{P_{\mu }}\left[ \sum\nolimits_{|z|>0}\max \{j:j\in z\}\Pr
(T_\mu=z|\mu =\mu ^{\ast })]=\mathbb{E}_{P_{\mu }}[D_{T}^{\mu }\right] .%
\end{array}%
\end{equation*}
\end{proof}

\section{\textcolor{black}{Appendix B: Mixture Functional ANOVA and Ultramodularity}}
\textcolor{black}{As anticipated in the main text, we discuss here the relationship between the mixture of functional ANOVA expansion and ultramodularity.} 
\begin{definition} \cite{Marinacci2005311} 
\label{D_UM_2}$g:\mathcal{X}\rightarrow \mathbb{R}$ is ultramodular, if 
\begin{equation}
g(\mathbf{x}^{1}+\Delta \mathbf{x})-g(\mathbf{x}^{1})\leq g(\mathbf{x}%
^{2}+\Delta \mathbf{x})-g(\mathbf{x}^{2})  \label{e_ultra1}
\end{equation}%
for all $\mathbf{x}^{1},\mathbf{x}^{2}\in \mathcal{X}$, $\Delta \mathbf{x}%
\geq 0$ with $\mathbf{x}^{1}\leq \mathbf{x}^{2}$ and $\mathbf{x}^{1}+\Delta 
\mathbf{x}^{1},\mathbf{x}^{2}+\Delta \mathbf{x}\in \mathcal{X}$. If the
inequality in ~\eqref{e_ultra1} is reversed, we say that $g(\cdot )$ is
neg-ultramodular.
\end{definition}

Relevant properties of ultramodular functions are discussed in \cite%
{Marinacci2005311} and \cite{Marinacci2008642}. In \cite{Beccacece2011}, the
question of whether ultramodularity is preserved within a functional ANOVA
expansion is addressed. We summarize the main results in the following lemma.
\begin{lemma}
\label{L_mi_ultram} If $g$
is ultramodular, then\newline
1) all non-orthogonalized effects $w_{z}^{\mu }(\mathbf{x}_{z})$ in its
functional ANOVA expansion are ultramodular\newline
2) all (orthogonalized) first order effects in its ANOVA expansion are
ultramodular\newline
3) Given $\Delta \mathbf{x}\geq 0$, if 
\begin{equation}
\Delta w_{z}^{\mu }(\mathbf{y})-\Delta w_{z}^{\mu }(\mathbf{x})\geq \sum
\limits_{v\subset z}\Delta g_{v}^{\mu }(\mathbf{y})-\sum \limits_{v\subset z}\Delta g_{v}^{\mu }(\mathbf{x})  \label{e_Dq_ultram}
\end{equation}%
where 
\begin{equation}
\begin{array}{ccc}
\Delta w_{z}^{\mu }(\mathbf{y})=w_{z}^{\mu }(\mathbf{y}_{z}+\Delta \mathbf{x}%
_{z})-w_{z}^{\mu }(\mathbf{y}_{z}) & \text{ and }& \Delta w_{z}^{\mu }(\mathbf{x}%
)=w_{z}^{\mu }(\mathbf{x}_{z}+\Delta \mathbf{x}_{z})-w_{z}^{\mu }(\mathbf{x}%
_{z})%
\\ 
\Delta g_{z}^{\mu }(\mathbf{y})=g_{z}^{\mu }(\mathbf{y}_{z}+\Delta \mathbf{x}%
_{z})-g_{z}^{\mu }(\mathbf{y}_{z}) &\text{ and }& \Delta g_{z}^{\mu }(\mathbf{x}%
)=g_{z}^{\mu }(\mathbf{x}_{z}+\Delta \mathbf{x}_{z})-g_{z}^{\mu }(\mathbf{x}%
_{z})%
\end{array}%
\label{e_DQ2_ultram}
\end{equation}%
then all effects in the functional ANOVA expansion of $g$ under measure $\mu$ are ultramodular.
\end{lemma}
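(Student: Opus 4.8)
The plan is to follow the same three-step architecture used for the monotonicity statement in Lemma \ref{L_mi_mon}, replacing the closure of monotone increments under integration by the analogous closure property for ultramodular increments. The single structural fact driving everything is that ultramodularity, being defined in \eqref{e_ultra1} through a sign condition on mixed increments, is preserved both by integrating out a block of coordinates against a positive (product) measure and by forming positive linear combinations; since the effect functions are built from $g$ by exactly these two operations, the result should come out without any hard estimate.

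For item 1, I would write $w_z^\mu(\mathbf{x}_z)=\int_{\mathcal{X}_{\sim z}}g(\mathbf{x}_z,\mathbf{x}_{\sim z})\,d\mu(\mathbf{x}_{\sim z})$ and verify ultramodularity in the $\mathbf{x}_z$ coordinates directly. Fixing $\mathbf{x}_z^1\le\mathbf{x}_z^2$ and $\Delta\mathbf{x}_z\ge 0$, I would apply the ultramodularity of $g$ to the two points $(\mathbf{x}_z^1,\mathbf{x}_{\sim z})\le(\mathbf{x}_z^2,\mathbf{x}_{\sim z})$ with shift vector $(\Delta\mathbf{x}_z,\mathbf{0})\ge 0$; this is legitimate because \eqref{e_ultra1} permits any nonnegative $\Delta\mathbf{x}$, in particular one with null components outside $z$. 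The resulting pointwise inequality in $\mathbf{x}_{\sim z}$ then survives integration against $d\mu(\mathbf{x}_{\sim z})\ge 0$, giving the claim. Item 2 follows immediately: by \eqref{e_mi1ik_non_orth} the first-order orthogonalized effect is $g_i^\mu(x_i)=w_{\{i\}}^\mu(x_i)-g_0^\mu$, and subtracting the constant $g_0^\mu$ leaves the mixed increments (here a univariate second difference, i.e.\ convexity) unchanged, so ultramodularity passes from $w_{\{i\}}^\mu$ to $g_i^\mu$.

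For item 3, I would rewrite the recursion \eqref{e_mi1ik_non_orth} as $g_z^\mu=w_z^\mu-\sum_{v\subset z}g_v^\mu$ and compute the second-order mixed difference defining ultramodularity. With the notation of \eqref{e_DQ2_ultram}, ultramodularity of $g_z^\mu$ amounts to $\Delta g_z^\mu(\mathbf{y})-\Delta g_z^\mu(\mathbf{x})\ge 0$ whenever $\mathbf{x}\le\mathbf{y}$; expanding through the recursion yields the identity $\Delta g_z^\mu(\mathbf{y})-\Delta g_z^\mu(\mathbf{x})=\big[\Delta w_z^\mu(\mathbf{y})-\Delta w_z^\mu(\mathbf{x})\big]-\sum_{v\subset z}\big[\Delta g_v^\mu(\mathbf{y})-\Delta g_v^\mu(\mathbf{x})\big]$, whose right-hand side is nonnegative precisely by the hypothesis \eqref{e_Dq_ultram}. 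Applying this for every $z$ with $|z|\ge 2$, together with the constant $g_0^\mu$ and items 1--2 for $|z|\le 1$, shows that every effect function in the expansion is ultramodular.

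The main obstacle, such as it is, lies in item 3 rather than in any analytic difficulty: unlike the monotonicity case, where passing a single sign through the integral and the recursion is immediate, here one must track a full second difference and take care that \eqref{e_Dq_ultram} is stated with a common shift $\Delta\mathbf{x}_z$ and evaluated at the two comparison points $\mathbf{x}\le\mathbf{y}$, so that it aligns term-by-term with the expansion of $\Delta g_z^\mu(\mathbf{y})-\Delta g_z^\mu(\mathbf{x})$. Once that bookkeeping is in place the conclusion is purely algebraic. I would also flag, as in the monotone case, that items 1 and 2 are only necessary diagnostics for ultramodularity of $g$ itself, whereas item 3 provides the genuine sufficient condition under which ultramodularity propagates to all higher-order effects.
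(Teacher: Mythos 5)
Your proof is correct. The paper itself contains no proof of this lemma --- it is imported as a summary of results from \cite{Beccacece2011} --- and your argument (applying the inequality \eqref{e_ultra1} pointwise with the partial shift $(\Delta \mathbf{x}_{z},\mathbf{0})$ and integrating against the product measure for item 1, noting that subtracting the constant $g_{0}^{\mu }$ leaves increments unchanged for item 2, and expanding the recursion \eqref{e_mi1ik_non_orth} into an identity for second differences so that \eqref{e_Dq_ultram} gives the sign for item 3) is exactly the standard one in that reference, and it mirrors the structure the paper uses in Appendix B to prove the mixture analogue, Proposition \ref{T:ultram}.
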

The next result states that the mixed functional ANOVA expansion of $g$ preserves results obtained with a unique input distribution when
ultramodularity is concerned.
\begin{proposition}
\label{T:ultram}
Given a prior $(\mathcal{M},%
\mathcal{F}(\mathcal{M}),P_{\mu })$ and 
$g\in \bigcap_{\mu^m \in \mathcal{M}} 
\mathcal{L}^1(\mathcal{X},\mathcal{B}(\mathcal{X}),\mu^m )$.
%
If $g$ is
ultramodular on $\mathcal{X}$ then the following holds for its mixture
functional ANOVA expansion:\newline
1) the non-orthogonalized effects of any order, $w_{z}^{P_{\mu }}(\mathbf{x}%
_{z})$, are ultramodular;\newline
2) all orthogonalized first order effects are ultramodular;\newline
3) given $\Delta \mathbf{x}>0$, if, for all $\mu\in \mathcal{M}$ eqs. \eqref{e_Dq_ultram}
and \eqref{e_DQ2_ultram} hold, then all \textcolor{black}{effect functions} in  \eqref{e_gen_FANOVA}
are ultramodular.
\end{proposition}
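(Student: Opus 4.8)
The plan is to mirror the proof of Proposition~\ref{T:rob:mon}, replacing monotonicity by ultramodularity throughout and exploiting a single structural fact: ultramodularity is preserved under nonnegative linear combinations. The starting observation is that the defining inequality \eqref{e_ultra1} is linear in $g$, so if two functions $h_{1},h_{2}$ each satisfy \eqref{e_ultra1} and $p_{1},p_{2}\geq 0$, then adding the two inequalities with weights $p_{1},p_{2}$ shows that $p_{1}h_{1}+p_{2}h_{2}$ satisfies \eqref{e_ultra1} as well. Hence any nonnegative combination of ultramodular functions is again ultramodular. The $\mathcal{L}^{1}$ hypothesis guarantees that all the conditional expectations defining $w_{z}^{\mu ^{m}}$ and $g_{z}^{\mu ^{m}}$ exist, so that the mixture effects are well defined and the combinations below make sense.

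For item~1, I would first invoke Lemma~\ref{L_mi_ultram}(1): since $g$ is ultramodular and each $\mu ^{m}$ is a product measure, the non-orthogonalized effect $w_{z}^{\mu ^{m}}(\mathbf{x}_{z})$ is ultramodular for every $m$ and every $z$. By Proposition~\ref{T:EmuXEPmu} the mixture non-orthogonalized effect satisfies $w_{z}^{P_{\mu }}(\mathbf{x}_{z})=\sum_{m=1}^{Q}p_{m}w_{z}^{\mu ^{m}}(\mathbf{x}_{z})$, a nonnegative combination (the weights obey $p_{m}>0$) of ultramodular functions, hence ultramodular by the preliminary observation. Item~2 follows identically, now using Lemma~\ref{L_mi_ultram}(2) to obtain ultramodularity of each orthogonalized first order effect $g_{i}^{\mu ^{m}}$ and then forming the mixture $g_{i}^{P_{\mu }}=\sum_{m}p_{m}g_{i}^{\mu ^{m}}$ via \eqref{e:gz:P}.

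For item~3, the hypothesis is that conditions \eqref{e_Dq_ultram}--\eqref{e_DQ2_ultram} hold under every $\mu ^{m}\in \mathcal{M}$. I would apply Lemma~\ref{L_mi_ultram}(3) separately to each product measure $\mu ^{m}$: this yields that, for each $m$, every effect $g_{z}^{\mu ^{m}}$ in the classical functional ANOVA expansion under $\mu ^{m}$ is ultramodular. Since $g_{z}^{P_{\mu }}(\mathbf{x}_{z})=\sum_{m=1}^{Q}p_{m}g_{z}^{\mu ^{m}}(\mathbf{x}_{z})$ by \eqref{e:gz:P}, the same nonnegative-combination argument delivers ultramodularity of every mixture effect in \eqref{e_gen_FANOVA}.

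The one point requiring care---and the reason I would not try to shortcut item~3 by applying Lemma~\ref{L_mi_ultram} directly to the mixture measure---is that $\mu _{\mathbf{X}}$ is \emph{not} a product measure: its coordinates are dependent, cf.\ \eqref{e:COVX1X2}. Since Lemma~\ref{L_mi_ultram} presupposes independence, it cannot be invoked for $\mu _{\mathbf{X}}$ itself. The correct route is therefore to apply the lemma componentwise to each $\mu ^{m}$ and only afterwards to mix; the dependence introduced by the mixture is harmless precisely because ultramodularity survives nonnegative linear combinations. I expect no other obstacle, as every step beyond this closure property is a direct transcription of the corresponding monotonicity argument.
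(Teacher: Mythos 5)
Your proposal is correct, and for items 1 and 2 it is essentially the paper's own argument: apply Lemma~\ref{L_mi_ultram} under each component measure $\mu^{m}$, then observe that $w_{z}^{P_{\mu}}=\sum_{m}p_{m}w_{z}^{\mu^{m}}$ and $g_{i}^{P_{\mu}}=\sum_{m}p_{m}g_{i}^{\mu^{m}}$ are convex combinations of ultramodular functions. The only cosmetic difference there is that the paper cites Proposition~4.1 of \cite{Marinacci2005311} for closure of ultramodularity under convex combinations, whereas you derive it directly from the linearity of the defining inequality \eqref{e_ultra1}; both are fine. For item 3, however, your route genuinely differs from the paper's. You apply Lemma~\ref{L_mi_ultram}(3) separately under each $\mu^{m}$, obtaining ultramodularity of every classical effect $g_{z}^{\mu^{m}}$, and then mix the \emph{conclusions} via \eqref{e:gz:P}. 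The paper instead mixes the \emph{hypotheses}: it takes the $P_{\mu}$-expectation of the inequality \eqref{e_Dq_ultram}, obtaining its mixture analogue $\Delta w_{z}^{P_{\mu}}(\mathbf{y})-\Delta w_{z}^{P_{\mu}}(\mathbf{x})\geq \sum_{v\subset z}\Delta g_{v}^{P_{\mu}}(\mathbf{y})-\sum_{v\subset z}\Delta g_{v}^{P_{\mu}}(\mathbf{x})$, from which ultramodularity of the mixture effects follows through the recursion $g_{z}^{P_{\mu}}=w_{z}^{P_{\mu}}-\sum_{v\subset z}g_{v}^{P_{\mu}}$ --- a final step the paper leaves implicit. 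Both routes are valid and rest on the same two pillars (the single-measure lemma and closure under nonnegative combinations); yours has the merit of using the lemma as a black box and making the last step explicit, while the paper's shows that the sufficient condition itself is stable under mixing, which is of some independent interest. Your closing caveat --- that Lemma~\ref{L_mi_ultram} cannot be applied directly to the mixture measure $\mu_{\mathbf{X}}$ because, by \eqref{e:COVX1X2}, it is not a product measure --- is a sound observation that the paper does not spell out, and it correctly identifies why one must either average the conclusions (your order of operations) or average the hypotheses (the paper's).
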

\begin{proof}[Proof of Proposition \protect\ref{T:ultram}]
Item 1. By item 1 of Lemma \ref{L_mi_ultram}, the ultramodularity of $g$
ensures that $w_{z}^{\mu }(\mathbf{x}_{z})$ is ultramodular for any given
probability measure $\mu $. Then, $w_{z}^{P_{\mu }}(\mathbf{x}_{z})$ is the
convex combination of ultramodular functions, which is ultramodular by
Proposition 4.1 in \cite{Marinacci2005311}, p. 317.\newline
Item 2. By item 2 of Lemma \ref{L_mi_ultram}, if $g$ is ultramodular, then
any $g_{i}^{\mu }(x_{i})$ is ultramodular given $\mu $. Hence, because $%
g_{i}^{P_{\mu }}(x_{i})$ is a convex combination of ultramodular functions,
it is ultramodular as well by Proposition 4.1 in \cite{Marinacci2005311}, p.
317.\newline
Item 3 is proven as follows. By the assumptions of item 3, for a generic $%
\mu $ it is: 
\begin{equation}
\Delta w_{z}^{\mu }(\mathbf{y})-\Delta w_{z}^{\mu }(\mathbf{x})\geq \sum
\limits_{v\subset z}\Delta g_{v}^{\mu }(\mathbf{y})-\sum
\limits_{v\subset z}\Delta g_{v}^{\mu }(\mathbf{x})
\end{equation}%
Taking the expectation of both sides, by the linearity of the involved
operators, 
\begin{equation}
\mathbb{E}_{P_{\mu }}\left[\Delta w_{z}^{\mu }(\mathbf{y})-\Delta w_{z}^{\mu }(%
\mathbf{x})\right]\geq \mathbb{E}_{P_{\mu }}\big[\sum \limits_{v \subset z}\Delta
g_{v}^{\mu }(\mathbf{y})-\sum \limits_{v \subset z}\Delta g_{v}^{\mu }(\mathbf{%
x})\big]
\end{equation}%
we obtain 
\begin{equation}
\Delta w_{z}^{P_{\mu }}(\mathbf{y})-\Delta w_{z}^{P_{\mu }}(\mathbf{x})\geq
\sum \limits_{v\subset z}\Delta g_{z}^{P_{\mu }}(\mathbf{y})-\sum
\limits_{v\subset z}\Delta g_{z}^{P_{\mu }}(\mathbf{x})
\end{equation}
\end{proof}


\begin{thebibliography}{10}
	
	\bibitem{AndeBorg12}
	{\sc B.~Anderson, E.~Borgonovo, M.~Galeotti, and R.~Roson}, {\em {Uncertainty
			in Climate Change Modelling: Can Global Sensitivity Analysis be of Help?}},
	Risk Analysis, 34 (2014), pp.~271--293.
	
	\bibitem{Aspi10}
	{\sc W.~Aspinall}, {\em {A route to more tractable expert advice}}, Nature, 463
	(2010), pp.~294--295.
	
	\bibitem{Aven2015a}
	{\sc T.~Aven}, {\em {Implications of black swans to the foundations and
			practice of risk assessment and management}}, Reliability Engineering {\&}
	System Safety, 134 (2015), pp.~83--91.
	
	\bibitem{Badea2008}
	{\sc A.~Badea and R.~Bolado}, {\em {Milestone M.2.1.D.4: Review of Sensitivity
			Analysis Methods and Experience}}, tech. report, PAMINA Project, Sixth
	Framework Programme, European Commission, 2008.
	
	\bibitem{Beccacece2011}
	{\sc F.~Beccacece and E.~Borgonovo}, {\em {Functional ANOVA, ultramodularity
			and monotonicity: Applications in multiattribute utility theory}}, European
	Journal of Operational Research, 210 (2011), pp.~326--335.
	
	\bibitem{Beckman1987}
	{\sc R.~J. Beckman and M.~D. McKay}, {\em {Monte Carlo estimation under
			different distributions using the same simulation}}, Technometrics, 29 (2)
	(1987), pp.~153--160.
	
	\bibitem{Berger2009905}
	{\sc J.~O. Berger, J.~M. Bernardo, and D.~Sun}, {\em {The formal definition of
			reference priors}}, Annals of Statistics, 37 (2009), pp.~905--938.
	
	\bibitem{Berger1996109}
	{\sc J.~O. Berger and L.~R. Pericchi}, {\em {The intrinsic bayes factor for
			model selection and prediction}}, Journal of the American Statistical
	Association, 91 (1996), pp.~109--122.
	
	\bibitem{BernSmit94}
	{\sc J.~Bernardo and A.~Smith}, {\em {Bayesian Theory}}, Wiley{\&}Sons, New
	York, NY, USA, second edi~ed., 2000.
	
	\bibitem{BorgHazePlish15}
	{\sc E.~Borgonovo, G.~Hazen, and E.~Plischke}, {\em {A Common Rationale for
			Global Sensitivity Measures and their Estimation}}, Risk Analysis, 36 (2016),
	pp.~1871--1895.
	
	\bibitem{BorgPlis15EJOR}
	{\sc E.~Borgonovo and E.~Plischke}, {\em {Sensitivity Analysis: A Review of
			Recent Advances}}, European Journal of Operational Research, 3 (2016),
	pp.~869--887.
	
	\bibitem{Butler2014a}
	{\sc M.~Butler, P.~Reed, K.~Fisher-Vanden, K.~Keller, and T.~Wagener}, {\em
		{Identifying parametric controls and dependencies in integrated assessment
			models using global sensitivity analysis}}, Environmental Modelling {\&}
	Software, 59 (2014), pp.~10--29.
	
	\bibitem{Butler2014}
	{\sc M.~Butler, P.~Reed, K.~Fisher-Vanden, K.~Keller, and T.~Wagener}, {\em
		{Inaction and climate stabilization uncertainties lead to severe economic
			risks}}, Climatic Change, 127 (2014), pp.~463--474.
	
	\bibitem{Buzz12}
	{\sc G.~T. Buzzard}, {\em {Global Sensitivity Analysis using Sparse Grid
			Interpolation and Polynomial Chaos}}, Reliability Engineering {\&} System
	Safety, 107 (2012), pp.~82--89.
	
	\bibitem{CaflMoro97}
	{\sc R.~E. Caflisch, W.~Morokoff, and A.~B. Owen}, {\em {Valuation of mortgage
			backed securities using Brownian bridges to reduce effective dimension}},
	Journal of Computational Finance, 1 (1997), pp.~27--46.
	
	\bibitem{CameMart47}
	{\sc R.~Cameron and W.~Martin}, {\em {The Orthogonal Development of Non-Linear
			Functionals in Series of Fourier-Hermite Functionals}}, Annals of
	Mathematics, 48 (1947), pp.~385--392.
	
	\bibitem{Cerretal13Robust}
	{\sc S.~Cerreia-Vioglio, F.~Maccheroni, M.~Marinacci, and L.~Montrucchio}, {\em
		{Classical Subjective Expected Utility}}, Proceedings of the National Academy of Sciences of the United States, 110 (2013),
	pp.~6754-6759.
	
	\bibitem{ChasGamb}
	{\sc G.~Chastaing, F.~Gamboa, and C.~Prieur}, {\em {Generalized Hoeffding-Sobol
			Decomposition for Dependent Variables:Application to Sensitivity Analysis}},
	Electronic Journal of Satistics, 6 (2012), pp.~2420--2448.
	
	\bibitem{Chic01}
	{\sc S.~Chick}, {\em {Input Distribution Selection for Simulation Experiments:
			Accounting for Input Uncertainty}}, Operations Research, 49 (2001),
	pp.~744--758.
	
	\bibitem{Cook91}
	{\sc R.~M. Cooke}, {\em {Experts in Uncertainty: Opinion and Subjective
			Probability in Science}}, Oxford Univ. Press, 1991.
	
	\bibitem{Cooke201512}
	{\sc R.~M. Cooke}, {\em {The Aggregation of Expert Judgment: Do Good Things
			Come to Those Who Weight?}}, Risk Analysis, 35 (2015), pp.~12--15.
	
	\bibitem{Crestaux_ress_2009}
	{\sc T.~Crestaux, O.~{Le Maitre}, and J.-M. Martinez}, {\em {Polynomial chaos
			expansion for sensitivity analysis}}, Reliability Engineering {\&} System
	Safety, 94 (2009), pp.~1161--1172.
	
	\bibitem{deFi37}
	{\sc B.~de~Finetti}, {\em {La prevision: ses lois logiques, ses sources
			subjectives. Translated to English by H.E. Kyburg and reprinted in Kyburg and
			Smokler (1964)}}, Annales de l' Istitute Henri Poincar{\'{e}}, 7 (1937),
	pp.~1--68.
	
	\bibitem{EfroStei81}
	{\sc B.~Efron and C.~Stein}, {\em {The Jackknife Estimate of Variance}}, The
	Annals of Statistics, 9 (1981), pp.~586--596.
	
	\bibitem{FishMack23}
	{\sc R.~A. Fisher and W.~A. Mackenzie}, {\em {The manurial response of
			different potato varieties}}, Journal of Agricultural Science, XIII (1923),
	pp.~311--320.
	
	\bibitem{Gao2016}
	{\sc L.~Gao, B.~A. Bryan, M.~Nolan, J.~D. Connor, X.~Song, and G.~Zhao}, {\em
		{Robust global sensitivity analysis under deep uncertainty via scenario
			analysis}}, Environmental Modelling {\&} Software, 76 (2016), pp.~154--166.
	
	\bibitem{PaulGartOHag05}
	{\sc P.~Garthwaite, J.~Kadane, and A.~O'Hagan}, {\em {Statistical Methods for
			Eliciting Probability Distributions}}, Journal of the American Statistical
	Association, 100 (2005), pp.~680--700.
	
	\bibitem{NBERBosetti2015}
	{\sc K.~Gillingham, W.~Nordhaus, D.~Antoff, G.~Blanford, V.~Bosetti,
		P.~Christensen, H.~McJeon, J.~Reilly, and P.~Sztorc}, {\em {Uncertainty in
			Climate Change: A Multimodel Comparison}}, NBER Working Paper No. 21637,
	October (2015), pp.~1--25.
	
	\bibitem{Guo02}
	{\sc W.~Guo}, {\em {Inference in smoothing spline analysis of variance}},
	Journal of the Royal Statistical Society Series B, 66 (2002), pp.~887--898.
	
	\bibitem{Hall2012}
	{\sc J.~W. Hall, R.~J. Lempert, K.~Keller, A.~Hackbarth, C.~Mijere, and D.~J.
		Mcinerney}, {\em {Robust Climate Policies Under Uncertainty : A Comparison of
			Robust Decision Making and Info-Gap Methods}}, Risk Analysis, 32 (2012),
	pp.~1657--1672.
	
	\bibitem{Hoef48}
	{\sc W.~Hoeffding}, {\em {A class of statistics with asymptotically normal
			distribution}}, Annals of Mathematical Statistics, 19 (1948), pp.~293--325.
	
	\bibitem{Homma1996}
	{\sc T.~Homma and A.~Saltelli}, {\em {Importance Measures in Global Sensitivity
			Analysis of Nonlinear Models}}, Reliability Engineering {\&} System Safety,
	52 (1996), pp.~1--17.
	
	\bibitem{Hooker2007}
	{\sc G.~Hooker}, {\em {Generalized Functional ANOVA Diagnostics for High
			Dimensional Functions of Dependent Variables}}, Journal of Computational and
	Graphical Statistics, 16 (2007), pp.~709--732.
	
	\bibitem{HuCaoHong12}
	{\sc Z.~Hu, J.~Cao, and L.~Hong}, {\em {Robust Simulation of Global Warming
			Policies Using the DICE Model}}, Management Science, 58 (2012),
	pp.~2190--2206.
	
	\bibitem{Huang98JMA}
	{\sc J.~Z. Huang}, {\em {Functional ANOVA Models for Generalized Regression}},
	Journal of Multivariate Analysis, 67 (1998), pp.~49--71.
	
	\bibitem{HuangAS98}
	{\sc J.~Z. Huang}, {\em {Projection Estimation in Multiple Regression with
			Application to Functional Anova Models}}, The Annals of Statistics, 26
	(1998), pp.~242--272.
	
	\bibitem{HuangetalAS99}
	{\sc J.~Z. Huang, C.~Kooperberg, C.~J. Stone, and Y.~K. Truong}, {\em
		{Functional ANOVA modeling for proportional hazards regression}}, The Annals
	of Statistics, 28 (2000), pp.~961--999.
	
	\bibitem{IshiHomm90}
	{\sc T.~Ishigami and T.~Homma}, {\em {An Importance Quantification Technique in
			Uncertainty Analysis for Computer Models}}, in ISUMA'90, First International
	Symposium on Uncertainty Modelling and Analysis, University of Maryland,
	1990.
	
	\bibitem{KaufSain10Bayesian}
	{\sc C.~G. Kaufman and S.~R. Sain}, {\em {Bayesian functional ANOVA modeling
			using Gaussian process prior distributions}}, Bayesian Analysis, 5 (2010),
	pp.~123--149.
	
	\bibitem{Li2012}
	{\sc G.~Li and H.~Rabitz}, {\em {General Formulation of HDMR Component
			Functions with Independent and Correlated Variables}}, Journal of
	Mathematical Chemistry, 50 (2012), pp.~99--130.
	
	\bibitem{LiRabi17}
	{\sc G.~Li and H.~Rabitz}, {\em {Relationship between Sensitivity Indices
			Defined by Variance- and Covariance-Based Methods}}, Reliability Engineering
	{\&} System Safety, 167 (2017), pp.~136--157,
	\href{http://dx.doi.org/10.1016/j.ress.2017.05.038}
	{doi:10.1016/j.ress.2017.05.038}.
	
	\bibitem{LinZhang06}
	{\sc Y.~Lin and H.~H. Zhang}, {\em {Component selection and smoothing in
			multivariate nonparametric regression}}, The Annals of Statistics, 34 (2006),
	pp.~2272--2297.
	
	\bibitem{Marinacci2005311}
	{\sc M.~Marinacci and L.~Montrucchio}, {\em {Ultramodular functions}},
	Mathematics of Operations Research, 30 (2005), pp.~311--332.
	
	\bibitem{Marinacci2008642}
	{\sc M.~Marinacci and L.~Montrucchio}, {\em {On concavity and
			supermodularity}}, Journal of Mathematical Analysis and Applications, 344
	(2008), pp.~642--654.
	
	\bibitem{McInLempKell11}
	{\sc D.~McInerney, R.~Lempert, and K.~Keller}, {\em {What are Robust Strategies
			in the Face of Uncertainty}}, Climatic Change, 91 (2011), pp.~29--41.
	
	\bibitem{McInKell08}
	{\sc D.~J. Mcinerney and K.~Keller}, {\em {Economically optimal risk reduction
			strategies in the face of uncertain climate thresholds}}, Climatic Change, 91
	(2008), pp.~29--41.
	
	\bibitem{MilgShan94}
	{\sc P.~Milgrom and C.~Shannon}, {\em {Monotone Comparative Statics}},
	Econometrica, 62 (1994), pp.~157--180.
	
	\bibitem{MillDietHeal12}
	{\sc A.~Millner, S.~Dietz, and G.~Heal}, {\em {Scientific Ambiguity and Climate
			Policy}}, Environmental and Resource Economics, 55 (2013), pp.~21--46.
	
	\bibitem{Grang14PNAS}
	{\sc M.~Morgan}, {\em {Use (and abuse) of expert elicitation in support of
			decision making for public policy}}, PNAS, 111 (2014), pp.~7176--7184.
	
	\bibitem{Nord08}
	{\sc W.~Nordhaus}, {\em {A Question of Balance: Weighing the Options on Global
			Warming Policies}}, Yale University Press, New Haven, NJ, USA, 2008.
	
	\bibitem{OaklOhag04}
	{\sc J.~Oakley and A.~O'Hagan}, {\em {Probabilistic Sensitivity Analysis of
			Complex Models: a Bayesian Approach}}, Journal of the Royal Statistical
	Society, Series B, 66 (2004), pp.~751--769.
	
	\bibitem{OHagan2006}
	{\sc A.~O'Hagan, C.~Buck, A.~Daneshkhah, J.~{Richard Eiser}, P.~Garthwaite,
		D.~Jenkinson, J.~Oakley, T.~Rakow, and I.~978-0-470-02999-2}, {\em {Uncertain
			Judgements: Eliciting Experts Probabilities}}, Wiley and Sons, UK, 2006.
	
	\bibitem{Ouchi2004}
	{\sc F.~Ouchi}, {\em {A Literature Review on the Use of Expert Opinion in
			Probabilistic Risk Assessment}}, World Bank Policy Research Working Paper
	3201,  (2004), pp.~1--17.
	
	\bibitem{Owen03}
	{\sc A.~B. Owen}, {\em {The Dimension Distribution and Quadrature Test
			Functions}}, Statistica Sinica, 13 (2003), pp.~1--17.
	
	\bibitem{Owen12b}
	{\sc A.~B. Owen}, {\em {Better Estimation of Small Sobol Sensitivity Indices}},
	ACM Transactions on Modeling and Computer Simulation, 23 (2013), p.~11.
	
	\bibitem{Owen13SIAM}
	{\sc A.~B. Owen}, {\em {Variance Components and Generalized Sobol Indices}},
	SIAM/ASA Journal on Uncertainty Quantification, 1 (2013), pp.~19--41.
	
	\bibitem{PaleConf16}
	{\sc L.~Paleari and R.~Confalonieri}, {\em {Sensitivity Analysis of a
			Sensitivity Analysis: We are Likely Overlooking the Impact of Distributional
			Assumptions}}, Ecological Modelling, 340 (2016), pp.~57--63.
	
	\bibitem{Plis12EMS}
	{\sc E.~Plischke}, {\em {How to Compute Variance-Based Sensitivity Indicators
			with Your Spreadsheet Software}}, Environmental Modelling {\&} Software, 35
	(2012), pp.~188--191.
	
	\bibitem{PlisBorg13}
	{\sc E.~Plischke and E.~Borgonovo}, {\em {What about totals? Alternative
			approaches to factor fixing}}, in Safety, Reliability and Risk Analysis:
	Beyond the Horizon - Proceedings of the European Safety and Reliability
	Conference, ESREL 2013, 2014, pp.~3339--3344.
	
	\bibitem{Rabitz1999}
	{\sc H.~Rabitz and O.~Alis}, {\em {General foundations of High-Dimensional
			Model Representations}}, J. Math. Chem., 25 (1999), pp.~197--233.
	
	\bibitem{Rahman20082091}
	{\sc S.~Rahman}, {\em {A polynomial dimensional decomposition for stochastic
			computing}}, International Journal for Numerical Methods in Engineering, 76
	(2008), pp.~2091--2116.
	
	\bibitem{Rahman2011a}
	{\sc S.~Rahman}, {\em {Global Sensitivity Analysis by Polynomial Dimensional
			Decomposition}}, Reliability Engineering {\&} System Safety, 96 (2011),
	pp.~825--837.
	
	\bibitem{Rahm14}
	{\sc S.~Rahman}, {\em {A Generalized ANOVA Dimensional Decomposition for
			Dependent Probability Measures}}, SIAM/ASA Journal on Uncertainty
	Quantification, 2 (2014), pp.~670--697.
	
	\bibitem{Rahman201127}
	{\sc S.~Rahman and A.~Chakraborty}, {\em {Stochastic multiscale fracture
			analysis of three-dimensional functionally graded composites}}, Engineering
	Fracture Mechanics, 78 (2011), pp.~27--46.
	
	\bibitem{RahmYada11IJUQ}
	{\sc S.~Rahman and V.~Yadav}, {\em {Orthogonal Polynomial Expansions for
			Solving Random Eigenvalue Problems}}, international Journal for Uncertainty
	Quantification, 1 (2011), pp.~163--187.
	
	\bibitem{Ratto2010}
	{\sc M.~Ratto and A.~Pagano}, {\em {Using Recursive Algorithms for the
			Efficient Identification of Smoothing Spline ANOVA Models}}, Advances in
	Statistical Analysis, 94 (2010), pp.~367--388.
	
	\bibitem{ratto_cpc_2007}
	{\sc M.~Ratto, A.~Pagano, and P.~Young}, {\em {State Dependent Parameter
			metamodelling and sensitivity analysis}}, Computer Physics Communications,
	177 (2007), pp.~863--876.
	
	\bibitem{Ren2016425}
	{\sc X.~Ren, V.~Yadav, and S.~Rahman}, {\em {Reliability-based design
			optimization by adaptive-sparse polynomial dimensional decomposition}},
	Structural and Multidisciplinary Optimization, 53 (2016), pp.~425--452.
	
	\bibitem{Saltelli_cpc_2002}
	{\sc A.~Saltelli}, {\em {Making Best Use of Model Valuations to Compute
			Sensitivity Indices}}, Computer Physics Communications, 145 (2002),
	pp.~280--297.
	
	\bibitem{Saltelli2010a}
	{\sc A.~Saltelli, P.~Annoni, I.~Azzini, F.~Campolongo, M.~Ratto, and
		S.~Tarantola}, {\em {Variance based sensitivity analysis of model output.
			{\{}D{\}}esign and estimator for the total sensitivity index}}, 181 (2010),
	pp.~259--270.
	
	\bibitem{Salt10TotalCPC}
	{\sc A.~Saltelli, P.~Annoni, I.~Azzini, F.~Campolongo, M.~Ratto, and
		S.~Tarantola}, {\em {Variance based sensitivity analysis of model output.
			Design and estimator for the total sensitivity index}}, Computer Physics
	Communications, 181 (2010), pp.~259--270.
	
	\bibitem{Saltelli2008}
	{\sc A.~Saltelli, M.~Ratto, T.~Andres, F.~Campolongo, J.~Cariboni, D.~Gatelli,
		M.~Saisana, and S.~Tarantola}, {\em {Global Sensitivity Analysis -- The
			Primer}}, Chichester, 2008.
	
	\bibitem{SaltIST15}
	{\sc A.~Saltelli, P.~Stark, W.~Becker, and P.~Stano}, {\em {Climate Models as
			Economic Guides: Scientific Challeng or Quixotic Quest?}}, Issues in Science
	and Technology,  (2015), pp.~79--84.
	
	\bibitem{saltelli_jasa_2002}
	{\sc A.~Saltelli and S.~Tarantola}, {\em {On the Relative Importance of Input
			Factors in Mathematical Models: Safety Assessment for Nuclear Waste
			Disposal}}, Journal of the American Statistical Association, 97 (2002),
	pp.~702--709.
	
	\bibitem{Saltelli2000}
	{\sc A.~Saltelli, S.~Tarantola, and F.~Campolongo}, {\em {Sensitivity Analysis
			as an Ingredient of Modelling}}, Statistical Science, 19 (2000),
	pp.~377--395.
	
	\bibitem{Saltelli1999}
	{\sc A.~Saltelli, S.~Tarantola, and K.~Chan}, {\em {A Quantitative, Model
			Independent Method for Global Sensitivity Analysis of Model Output}},
	Technometrics, 41 (1999), pp.~39--56.
	
	\bibitem{sobol_mmce_1993}
	{\sc I.~M. Sobol'}, {\em {Sensitivity analysis for non-linear mathematical
			models}}, Mathematical Modelling and Computational Experiment, 1 (1993),
	pp.~407--414.
	
	\bibitem{Sobo93}
	{\sc I.~M. Sobol'}, {\em {Sensitivity Estimates for Nonlinear Mathematical
			Models}}, Mathematical Modelling {\&} Computational Experiments, 1 (1993),
	pp.~407--414.
	
	\bibitem{StroOaki12JRSSC}
	{\sc M.~Strong, J.~E. Oakley, and J.~Chilcott}, {\em {Managing Structural
			Uncertainty in Health Economic Decision Models: a Discrepancy Approach}},
	Journal of the Royal Statistical Society, Series C, 61 (2012), pp.~25--45.
	
	\bibitem{Sudret2008}
	{\sc B.~Sudret}, {\em {Global sensitivity analysis using polynomial chaos
			expansion}}, 93 (2008), pp.~964--979.
	
	\bibitem{Tarantola2006}
	{\sc S.~Tarantola, D.~Gatelli, and T.~A. Mara}, {\em {Random balance designs
			for the estimation of first order global sensitivity indices}}, 91 (2006),
	pp.~717--727.
	
	\bibitem{Wahba78}
	{\sc G.~Wahba}, {\em {Improper Priors, Spline Smoothing and the Problem of
			Guarding Against Model Errors in Regression}}, Journal of the Royal
	Statistical Society Series B, 40 (1978), pp.~364--372.
	
	\bibitem{Wald47}
	{\sc A.~Wald}, {\em {Foundations of a General Theory of Sequential Decision
			Functions}}, Econometrica, 15 (1947), pp.~279--313.
	
	\bibitem{Wang06}
	{\sc X.~Wang}, {\em {On the Effects of Dimension Reduction Techniques on Some
			High-Dimensional Problems in Finance}}, Operations Research, 54 (2006),
	pp.~1063--1078.
	
	\bibitem{Wiener38}
	{\sc N.~Wiener}, {\em {The Homogeneous Chaos}}, American Journal of
	Mathematics, 60 (1938), pp.~897--936.
	
	\bibitem{XiuKarn02}
	{\sc D.~Xiu and G.~Karniadakis}, {\em {The Wiener-Askey Polynomial Chaos for
			Stochastic Differential Equations}}, SIAM Journal on Scientific Computing, 24
	(2002), pp.~619--644.
	
\end{thebibliography}

\end{document}